\numberwithin{equation}{section}
\renewcommand{\email}[2][]{
\ifx\emails\@empty\relax\else{\g@addto@macro\emails{,\space}}\fi
\@ifnotempty{#1}{\g@addto@macro\emails{\textrm{(#1)}\space}}
\g@addto@macro\emails{#2}
}
\theoremstyle{plain}
\newtheorem{theorem}{Theorem}[section]
\newtheorem*{theorem*}{Theorem}
\newtheorem{proposition}[theorem]{Proposition}
\newtheorem{corollary}[theorem]{Corollary}
\newtheorem{lemma}[theorem]{Lemma}
\theoremstyle{definition}
\newtheorem{assumption}[theorem]{Assumption}
\newtheorem{remark}[theorem]{Remark}
\newtheorem{example}[theorem]{Example}
\newtheorem{definition}[theorem]{Definition}
\newcommand{\mf}{\mathfrak}
\DeclareMathOperator{\es}{\rm ES}
\newcommand{\R}{\mathbb{R}} 
\newcommand{\Q}{\mathbb{Q}}
\newcommand{\E}{\mathbb{E}}
\newcommand{\tn}{\textnormal}
\newcommand{\ind}{\mathbf{1}}
\renewcommand{\P}{\mathbb{P}}
\newcommand{\Norm}{\|\cdot\|}
\newcommand{\N}{\mathbb{N}}
\newcommand{\dom}{\textnormal{dom}}
\newcommand{\CN}{\mathcal N}
\newcommand{\CF}{\mathcal F}
\newcommand{\CA}{\mathcal A}
\newcommand{\CG}{\mathcal G}
\newcommand{\CX}{\mathcal X}
\newcommand{\CP}{\mathcal P}
\newcommand{\CI}{\mathcal I}
\renewcommand{\Q}{{\mathbb Q}}
\newcommand{\CH}{\mathcal H}
\newcommand{\Linfty}{L^\infty}
\newcommand{\ph}{\varphi}
\newcommand{\price}{\mathfrak p}
\newcommand{\peq}{\preceq}
\newcommand{\mbf}{\mathbf}
\newcommand{\w}{\widehat}
\newcommand{\Scal}{{\mathcal S}}
\title[Risk sharing under heterogeneous beliefs without convexity]{Risk sharing under heterogeneous beliefs without convexity}
\author[F.-B.~Liebrich]{Felix-Benedikt Liebrich}
\address{Institute of Actuarial and Financial Mathematics \& House of Insurance,\\
Leibniz Universit\"at Hannover, Germany}
\email{felix.liebrich@insurance.uni-hannover.de}
\thanks{\textit{Acknowledgements:} I would like to thank two anonymous referees, Cosimo Munari, Gregor Svindland, Ruodu Wang, and the participants of the Seminar on Risk Management and Actuarial Science at the University of Waterloo for valuable comments and discussions related to this work.}
\date{May 2, 2022}
\begin{document}

\parindent 0em \noindent

\begin{abstract}
We consider the problem of finding Pareto-optimal allocations of risk among finitely many agents. 
The associated individual risk measures are law invariant, but with respect to agent-dependent and potentially heterogeneous reference probability measures. Moreover, we assume that the individual risk assessments are consistent with the respective second-order stochastic dominance relations. We do \textit{not} assume their convexity though. A simple sufficient condition for the existence of Pareto optima is provided. The proof combines local comonotone improvement with a Dieudonn\'e-type argument, which also establishes a link of the optimal allocation problem to the realm of ``collapse to the mean" results. Finally, we extend the results to capital requirements with multidimensional security markets. 
\end{abstract}

\maketitle


\section{Introduction}

This paper addresses the problem of finding optimal allocations of risk among finitely many agents, i.e., optimisers for the problem
\begin{equation}\begin{array}{l}\label{problem}\sum_{i\in\CI}\rho_i(X_i)\longrightarrow\min\\
\text{subject to }\sum_{i\in\CI}X_i=X.\end{array}\end{equation}
The agents under consideration form a finite set $\mathcal I$. Each agent $i\in\CI$ measures the risk of net losses $Y$ in a space $\CX$ with a risk measure $\rho_i$. Given the total loss $X$ collected in the system, an allocation attributes a portion $X_i\in\CX$ to each agent, i.e., the condition $\sum_{i\in\CI}X_i=X$ holds. We do not impose any restriction on the notion of an allocation, i.e., every vector whose coordinates sum up to $X$ is hypothetically feasible.
An allocation of $X$ is {\em optimal} (or Pareto optimal) if it minimises the aggregated risk $\sum_{i\in\CI}\rho_i(X_i)$ in the system.
This risk-sharing problem is of fundamental importance for the theory of risk measures, for capital allocations in capital adequacy, and in the design and discussion of regulatory frameworks; cf.~\cite{FilKup,Tsanakas,Weber}. 

Usually, the space of net losses appearing in \eqref{problem} is an infinite-dimensional space of \textit{random variables}, which complicates finding optima. 
Individual risk measures are often assumed to be convex and monetary, standard axioms in the theory of risk measures since \cite{FoeSch2002,Frittelli} and also common for utility assessments such as  variational preferences; cf.\ \cite{Variational}.
Convexity alone is not sufficient though. A powerful solution theory has so far mostly been established under the additional assumption of \textit{law invariance}, meaning that the risk of a random variable $X$ on a probability space $(\Omega,\CF,\P)$ merely depends on its distribution under the reference measure $\P$. 
A rich strand of literature has studied the wide-ranging analytic consequences of law invariance in conjunction with convexity; see, e.g., \cite{General,Fatou,FilSvi,JST2,Leung,Allocations,continuity}, and the references therein. Studies of the risk sharing problem for convex monetary risk measures (or equivalently, concave monetary utility functions) are~\cite{Acciaio_alone,ElKaroui,Burgert,Svindland,JST}. For risk sharing problems with special  law-invariant, but not necessarily convex functionals, see
~\cite{Embrechts1,Liuetal}.

The solution theory in the law-invariant case splits in two steps. In a first step, the set of optimisation-relevant allocations is reduced to \textit{comonotone} allocations. This is achieved by \textit{comonotone improvement results} based on~\cite{Landsberger,Carlier,Svindland,Rueschendorf}. 
In a second step, one finds suitable bounds which prove that relevant comonotone allocations form a compact set. That enables approximation procedures and the selection of converging optimising sequences.\footnote{~These two steps are separated explicitly in, e.g., \cite{Allocations}, where also numerous economic optimisation problems are analysed according to that scheme. Under cash-additivity of the involved functionals, the second step usually poses no problem as one can ``rebalance cash" among the involved agents. For the details, see for instance \cite{Svindland}.} 

The previous arguments tend to rely fundamentally on the existence of a single ``objective" reference probability measure which is shared and accepted by all agents in $\CI$. A more recent and quickly growing strand of literature in finance, insurance, and economics, however, dispenses with this paradigm. Two economic considerations motivate these {\em heterogeneous} reference probability measures. Firstly, different agents may have access to different sources of information, resulting in information asymmetry and in different subjective beliefs. Secondly, agents may entertain heterogeneous probabilistic subjective beliefs as a result of their \textit{preferences} or their use of different internal models.\footnote{~While heterogeneous subjective beliefs and information asymmetry may not be synonymous, the distinction will not play a role for the purposes of the present paper.}

Applying heterogeneous (probabilistic) beliefs in risk sharing and related problems, \cite{Amarante} explores the demand for insurance when the insurer exhibits ambiguity, whereas the insured is an expected-utility agent.
Under heterogeneous reference probabilities for insurer and insured, \cite{Boonen} provides optimal reinsurance designs, and \cite{Chi,Ghossoub} generalise Arrow's ``Theorem of the Deductible''.
\cite{Boonenetal} studies bilateral risk sharing with exposure constraints and admit a very general relation between the two involved reference probabilities. \cite[Section 6]{Asimitetal} consider optimal risk sharing under heterogeneous reference probabilities affected by exogenous triggering events, while \cite{LeVan} studies the existence of Pareto optima and equilibria in a finite-dimensional setting when optimality of allocations is assessed relative to individual and potentially heterogeneous sets of probabilistic priors.
\cite{Embrechts2} studies risk sharing for Value-at-Risk and Expected Shortfall agents endowed with heterogeneous reference probability measures, and provide explicit formulae for the shape of optimal allocations.
\cite{Liu} makes similar contributions to weighted risk sharing with distortion risk measures under heterogeneous beliefs.
Finally, \cite{Acciaio} is devoted the existence question for optimal allocations among two agents with heterogeneous reference models, each measuring risk with a law-invariant concave monetary utility function.
The risk sharing problem is constrained to random variables over a \textit{finite} $\sigma$-algebra though, which reduces the optimisation problem to a finite-dimensional space. Moreover, one of the two reference probabilities involved only takes rational values on the finite $\sigma$-algebra in question. 

\vspace{0.2cm}

Against the outlined backdrop of the existing literature, several key features of our results stand out. 

\vspace{0.2cm}

\textsf{Infinite-dimensional setting:} Whereas dealing with heterogeneous reference probability has sometimes been facilitated by restriction to a finite dimensional setting, we consider problem~\eqref{problem} for the space $\CX$ of all bounded random variables over an atomless probability space $(\Omega,\CF,\P)$, a \textit{bona fide} infinite-dimensional space.

\vspace{0.2cm}

\textsf{True generalisation of most known results:} As already stated, we consider a finite set $\CI$ of agents, each measuring risk with a functional $\rho_i\colon L^\infty\to\R$ which is law invariant with respect to a probability measure $\Q_i$ equivalent to $\P$.
The latter only plays the role of a gauge and we are free to assume $\P=\Q_j$ for some $j\in\CI$.
Our main results in Section~\ref{sec:main} provide mild sufficient conditions for the existence of optimal risk allocations without a rationality condition on involved probabilities as in \cite{Acciaio}. 
Mathematically, key to our strategy is to adapt the two-step procedure of comonotone improvement to the case of heterogeneous reference probabilities. 
This will mostly be achieved in Appendix~\ref{sec:local}.
However, ``rebalancing cash" is not an option anymore under heterogeneity. 
Instead, the second step of the two-step procedure sketched above will be crucial. While optimal allocations will usually \textit{not} be comonotone in the heterogeneous case, 
our procedure nevertheless unifies various existence results for optimal allocations under a single reference probability measure and extends them to the case of multiple reference measures. 

\vspace{0.2cm} 

\textsf{Nonconvex risk measures:} A third and particularly important aspect is that we do not assume convexity (or concavity) of the involved functionals. Instead, we work with three axiomatic properties of risk measures introduced and studied in the very recent literature as alternatives to (quasi)convexity. 
While this trend is in la-rge parts motivated by the lacking convexity of many distortion risk measures such as the Value-at-Risk, \cite[Section 1.2]{AmaranteRepr} presents further critical remarks on subadditivity.

A requirement we shall impose throughout our study is the assumption that individual risk measures be {\em consistent}, a property recently axiomatised by \cite{Consistent}.\footnote{~\cite[Section 4]{Consistent} already provides a preliminary study of risk sharing with consistent risk measures under a single reference probability measure; see also Section~\ref{sec:discussion}.}
While many desirable characterisations are shown to be equivalent, their  eponymous characteristic feature is monotonicity with respect to second-order stochastic dominance. Each law-invariant and convex monetary risk measure is consistent (up to an affine transformation), but the converse implication does \textit{not} hold. 

The second ingredient---which is mostly of technical relevance and makes the assumptions in the main results particularly satisfiable---is {\em star-shapedness}. Star-shaped risk measures have been studied systematically in the recent working paper~\cite{Starshape}. They are motivated, for instance, by the observation that subadditive risk measures intertwine the measurement of concentration and liquidity effects (by curves $\R_+\ni t\mapsto\rho(tX)$) with diversification benefits from {\em merging} portfolios. The latter is translated by (quasi)convexity of the risk measures in question.
Star-shaped risk measures are more agnostic and replace (quasi)convexity by the demand that decreasing exposition to an acceptable loss profile (having at most neutral risk) does not lead to a loss of acceptability:
\[\forall\, X\in\CX\,\forall\,\lambda\in[0,1]:\quad \rho(X)\le 0\quad\Longrightarrow\quad \rho(\lambda X)\le 0.\]
Further economic motivation is discussed in \cite[Section 2]{Starshape}. 

Third, we shall require a certain compatibility of probabilistic beliefs and assume the existence of a finite measurable partition $\pi$ of $\Omega$ such that the agents agree on the associated conditional distributions. 
This is akin to and generalises the (much more specific) setting of \cite{Marshall}, one of the earliest contributions on heterogeneous reference measures. 
While Section~\ref{sec:heterogeneous} will give more details and shed more light on this assumption, 
we anticipate here the consequence that individual risk measures  fall in the most relevant class of {\em scenario-based risk measures} recently introduced in~\cite{WangZiegel}. 
A desirable aspect of the latter perspective is that scenario-basedness is preserved under the infimal convolution operation (cf.\ Corollary~\ref{cor:Pbased}), while the individual law-invariance is lost in the general heterogeneous case. 

\vspace{0.2cm}

\textsf{Dispensing with law invariance:} Last, we will also drop law invariance of the involved functionals {\em ex post}. In Theorem~\ref{thm:application2}, we consider \textit{capital requirements} computed on the basis of the acceptance set of a consistent risk measure and a potentially multidimensional frictionless security market. 
The former poses a capital adequacy test that the agent attempts to pass with hedging instruments from the security market. 
These capital requirements follow the spirit of~\cite{FritScand} and are studied in, e.g., \cite{Baes,FKM2015,Liebrich}. 
Depending on the securities available in the security market, these functionals may well not be cash-additive. 
We will show that in this case, they also lose the property of being law invariant.
In contrast to most existence results for optimal risk allocations in the literature, our approach is able to cope with the observed lack of law invariance and yields a simple sufficient condition for the existence of Pareto-optimal allocations.

\vspace{0.2cm}

The paper unfolds as follows. Section~\ref{sec:prelim} collects preliminaries. Consistent and star-shaped risk measures and their admissibility for our main results are studied at length in Section~\ref{sec:consistent}. 
In Section~\ref{sec:heterogeneous}, we carefully introduce our setting of heterogeneous reference probability measures and motivate the buttressing assumption from four different angles. 
The main results are then collected in Section~\ref{sec:main}. Theorem~\ref{thm:risksh_mon} covers the case of agents endowed with consistent risk measures. The latter are cash-additive and law invariant by definition. Theorem~\ref{thm:application2}, goes one step further and addresses capital requirements with multidimensional security markets and acceptance sets induced by consistent risk measures, thus dispensing with cash-additivity and law invariance. 
All proofs, mathematical details, and auxiliary results can be found in Appendices~\ref{appendix}--\ref{last appendix}. Examples illustrating the main text are relegated to Appendix~\ref{sec:examples}.

\subsection{Preliminaries}\label{sec:prelim}

We first outline terminology, notation, and conventions adopted throughout the paper:
\begin{itemize}
\item The effective domain of a function $f\colon S\to[-\infty,\infty]$ defined on a nonempty set $S$ is 
\begin{center}$\dom(f):=\{s\in S\mid f(s)\in\R\}$.\end{center}

\item For an arbitrary natural number $K\in\N$, we denote the set $\{1,\ldots,K\}$ by $[K]$.

\item Bold-faced symbols denote vectors of objects. 

\item $(\Omega,\CF,\P)$ is an \textit{atomless} probability space, i.e., it admits a random variable whose cumulative distribution function under $\P$ is continuous. In particular, $\P$ has {\em convex range}, i.e., for all $A\in\CF$ and all $p\in[0,\P(A)]$ we find $B\in\CF$ with the properties $B\subset A$ and $\P(B)=p$.
Whenever $\Q\ll\P$ is another probability measure, $\Q$ is also atomless. 
\item $L^0$ denotes the space of equivalence classes up to $\P$-almost-sure ($\P$-a.s.) equality of real-valued random variables. The subspaces of equivalence classes of bounded and $\P$-integrable random variables are denoted by $L^\infty$ and $L^1$, respectively. 
All these spaces are canonically equipped with the $\P$-a.s.\ order $\le$, and all appearing (in)equalities between random variables are to be understood in this sense. We denote the respective positive cones by $L^\infty_+$ and $L^1_+$, and the supremum norm on $L^\infty$ by $\Norm_\infty$.

\item If we consider the spaces $L^\infty$ and $L^1$ with respect to a probability measure $\Q\neq\P$ on $(\Omega,\CF)$, we shall write $L^\infty_\Q$ and $L^1_\Q$. 

\item Expectations and conditional expectations, given a sub-$\sigma$-algebra $\CG\subset\CF$, computed with respect to a measure $\Q$ are denoted by $\E_\Q[\cdot]$ and $\E_\Q[\cdot|\CG]$. 

\item For a probability measure $\Q$ on $(\Omega,\CF)$ and an event $B\in\CF$ with $\Q(B)>0$, we define the conditional probability measure $\Q^B\colon\CF\to[0,1]$ by $\Q_B(A):=\tfrac{\Q(A\cap B)}{\Q(B)}$. 

\item The absolute continuity relation between two probability measures $\P$ and $\Q$ on $(\Omega,\CF)$ is denoted by $\Q\ll\P$, and equivalence of probability measures by $\Q\approx\P$.

\item If for two elements $X,Y\in L^0$ and a probability measure $\Q\ll\P$ the distributions $\Q\circ Y^{-1}$ of $Y$ under $\Q$ agrees with the distribution $\Q\circ X^{-1}$ of $X$ under $\Q$, we shall write $X\sim_\Q Y$. 

\item A subset $\CA\subset L^0$ is \emph{law invariant} with respect to a probability measure $\Q\ll\P$ (or $\Q$-law invariant) if, whenever $X\in\CA$ and $Y\sim_\Q X$, then also $Y\in\CA$.

\item Given nonempty sets $\CA\subset L^0$ and $S$, a function $f\colon\CA\to S$ is law invariant with respect to a probability measure $\Q\ll\P$ if $f(X)=f(Y)$ holds for all $X,Y\in\CA$ satisfying $X\sim_\Q Y$.

\item $\Pi$ denotes the set of finite measurable partitions $\pi\subset\CF$ of $\Omega$ satisfying $\P(B)>0$ for all $B\in\pi$.

\item Denoting the space of all bounded linear functionals on $L^\infty$ by $(L^\infty)^*$, the \emph{convex conjugate} of a (not necessarily convex) function $f\colon L^\infty\to[-\infty,\infty]$ is the function $f^*\colon(L^\infty)^*\to[-\infty,\infty]$ defined by 
\[f^*(\phi)=\sup\{\phi(X)-f(X)\mid X\in L^\infty\}.\]
\item A \emph{monetary risk measure} $\rho\colon L^\infty\to\R$ is a map that is:
\begin{enumerate}[(a)]
\item \emph{monotone}, i.e., for $X,Y\in L^\infty$ with $X\le Y$, $\rho(X)\le \rho(Y)$ holds.
\item \emph{cash-additive}, i.e., 
\[\forall\,X\in L^\infty\,\forall\,m\in\R:\quad \rho(X+m)=\rho(X)+m.\]
\end{enumerate}
Its {\em acceptance set} $\CA_\rho$ collects all loss profiles that bear neutral risk: 
\[\CA_\rho:=\{X\in\CX\mid \rho(X)\le 0\}.\]
By monotonicity and cash-additivity, $\rho$ is a norm-continuous function and the acceptance set $\CA_\rho$ is closed.
$\rho$ can also be recovered from $\CA_\rho$ with the formula
\begin{center}$\rho(X)=\inf\{m\in\R\mid X-m\in\CA_\rho\},\quad X\in L^\infty.$\end{center}
\noindent As mentioned in the introduction and reflected by the preceding definition, risk measures are applied to losses net of gains in this manuscript, not gains net of losses. In particular, nonnegative random variables correspond to pure losses. 
\item A monetary risk measure $\rho\colon L^\infty\to\R$ is {\em normalised} if $\rho(0)=0$.
\item The {\em asymptotic cone} of the acceptance set $\CA_\rho$ of a risk measure $\rho$ is the set $\CA_\rho^\infty$ of all $U\in L^\infty$ that can be represented as $U=\lim_{k\to\infty}s_kY_k$ for a vanishing sequence $(s_k)_{k\in\N}\subset(0,\infty)$ and a sequence $(Y_k)_{k\in\N}\subset\CA_\rho$.
More information on asymptotic cones in a finite-dimensional setting can be found in \cite[Chapter 2]{Auslender}. \end{itemize}

\section{Admissible risk measures}\label{sec:consistent}

\subsection{Consistent and star-shaped risk measures}

 This paper proves the existence of optimal risk sharing schemes for {\em consistent} risk measures introduced in \cite{Consistent}. Consistency means that the risk assessment respects the second-order stochastic dominance relation between arguments. Given random variables $X,Y\in L^\infty$ and a probability measure $\Q\ll\P$, recall that $Y$ dominates $X$ in $\Q$-second-order stochastic dominance relation if $\E_\Q[v(X)]\le\E_\Q[v(Y)]$ holds for all convex and nondecreasing functions $v\colon\R\to\R$.\footnote{~In view of the references we cite, we shall mostly work with the convex order defined in Appendix~\ref{proofs of sec 3}. In fact, if $Y$ dominates $X$ with respect to the $\Q$-convex order, $Y$ also second-order stochastically dominates $X$.}  

\begin{definition}[cf.\ \cite{Consistent}]\label{def:consistent}
Let $\Q\ll\P$ be a probability measure. A normalised monetary risk measure $\rho\colon L^\infty\to\R$ is a \emph{$\Q$-consistent risk measure} if, whenever $Y\in L^\infty$ dominates $X\in L^\infty$ in second-order stochastic dominance relation under $\Q$, then also 
$\rho(X)\le\rho(Y)$.
\end{definition}

Each normalised, convex, and $\Q$-law-invariant monetary risk measure is $\Q$-consistent, but also minima of such risk measures (cf.\ \cite[Proposition 3.2]{Consistent}).
More precisely, by \cite[Theorem 3.3]{Consistent}, for each consistent risk measure $\rho$ there is a set $\mathcal T$ of $\Q$-law-invariant and convex monetary risk measures $\tau$ such that
\begin{equation}\label{eq:consistent}\rho(X)=\min_{\tau\in\mathcal T}\tau(X),\quad X\in L^\infty.\end{equation}
A direct consequence of \eqref{eq:consistent} is the following formula for the convex conjugate $\rho^*$ of $\rho$:
\begin{equation}\label{eq:conjugate}\rho^*(\phi)=\sup_{Y\in\CA_\rho}\phi(Y)=\sup_{\tau\in\mathcal T}\tau^*(\phi),\quad \phi\in(L^\infty)^*.\end{equation}
Every $\Q$-consistent risk measure satisfies
\[\rho(X)\ge\rho(\E_\Q[X])=\E_\Q[X]=\E_\P\big[\tfrac{{\rm d}\Q}{{\rm d}\P}X],\quad X\in L^\infty,\]
which means that $\rho\ge \E_\Q[\cdot]$ and $\rho^*\big(\tfrac{{\rm d}\Q}{{\rm d}\P}\big)=0$.

\begin{definition}[cf.\ \cite{Starshape}]
A normalised monetary risk measure $\rho\colon L^\infty\to\R$ is {\em star shaped} if the acceptance set $\CA_\rho$ is a star-shaped set, i.e., $sY\in\CA_\rho$ holds for all $Y\in\CA_\rho$ and all $s\in[0,1]$.
\end{definition}

Star-shaped risk measures have recently been studied in detail in \cite{Starshape}, and in our definition we implicitly invoke their characterisation provided in \cite[Proposition 2]{Starshape}. Each normalised convex monetary risk measure is star shaped; like consistency, star-shapedness is a weaker property than convexity.  
We also remark that the class of star-shaped consistent risk measures is discussed in \cite[Theorem 11]{Starshape}. While they will mostly play a technical role in our study, further background on them is provided in Lemma~\ref{lem:star}.

\subsection{Admissibility of consistent risk measures}

We now isolate which $\Q$-consistent risk measures $\rho$, $\Q\ll\P$, are admissible for our main results. As preparation, suppose that  $\phi\in(L^\infty)^*$. 
Like in the convex case one verifies on the basis of  cash-additivity and monotonicity  that $\rho^*(\phi)<\infty$ only if $\phi$ is positive --- $\phi|_{L^\infty_+}\ge 0$ --- and $\phi(1)=1$. As usual, we identify $L^1$ with a (proper) subspace of $(L^\infty)^*$ using the pairing $L^1\times L^\infty\ni(Z,X)\mapsto\E_\P[ZX].$
Note that the reference probability under which we integrate is $\P$, even though we may consider a $\Q$-consistent risk measure, $\Q\neq\P$. 
We conclude that $Z\in\dom(\rho^*)\cap L^1$ only if $Z\in L^1_+$ and $\E_\P[Z]=1$; i.e., these elements are probability densities. 

\begin{definition}\label{def:compatible}
A $\Q$-consistent risk measure $\rho\colon L^\infty\to\R$ is \emph{admissible} if there is $Z^*\in L^1$ with the following two properties:
\begin{enumerate}[(a)]
\item $\rho^*(Z^*)<\infty$. 
\item If $U\in\CA^\infty_\rho$ satisfies $\E_\P[Z^*U]=0$, then $U=0$.
\end{enumerate} 
$\mbf C(\rho)$ denotes the set of all \emph{compatible} $Z^*\in L^1$, i.e., $Z^*$ has properties (a)--(b) above. 
\end{definition}

\begin{lemma}\label{lem:is equiv}
Suppose $\Q\ll\P$ and that $\rho\colon L^\infty\to\R$ is an admissible $\Q$-consistent risk measure. Then the following assertions hold:
\begin{itemize}
    \item[(1)]$\Q\approx\P$.
    \item[(2)]Each $Z^*\in\mbf C(\rho)$ satisfies $\P(Z^*>0)=1$.
    \item[(3)]For all $Z\in \dom(\rho^*)\cap L^1$, $Q\in\mbf C(\rho)$, and $\lambda\in (0,1]$,
\begin{equation}\label{eq:is compatible}\lambda Q+(1-\lambda)Z\in\mbf C(\rho).\end{equation}
\end{itemize}
\end{lemma}

Definition~\ref{def:compatible} is best illustrated in the special case where $\rho$ is convex.
By \cite[Proposition 1.1]{continuity}, $\dom(\rho^*)\cap L^1$ is rich under these assumptions. 
Moreover, convexity entails that the asymptotic cone $\CA_\rho^\infty$ collects acceptable net losses $U$ of particular quality; arbitrary quantities $tU$, $t>0$, thereof are still acceptable. 
If $\E_\P[ZU]=0$ for a strictly positive probability density $Z\in L^1_+$ and $U$ is not constant, then it must take negative and positive values, thus triggering net gains {\em and} net losses. 
If $\rho$ is sufficiently risk averse and gains can never fully compensate large losses obtaining in different states, this should disqualify $U$ from being acceptable in arbitrary volumes. 

The notions of admissibility and compatibility are of central importance for the second step in the quest for optimal allocations outlined in the introduction: extracting a convergent optimising sequence. They thus appear prominently in Theorems~\ref{thm:risksh_mon} and \ref{thm:application2} below. 
The next proposition presents a more concrete description of admissible risk measures.

\begin{proposition}\label{compatible nonempty}
Let $\Q\approx\P$ be a probability measure and $\rho\colon L^\infty\to\R$ a $\Q$-consistent risk measure. Then the following statements are equivalent:
\begin{itemize}
\item[(1)]$\rho$ is admissible.
\item[(2)]$\tfrac{{\rm d}\Q}{{\rm d}\P}\in\mbf C(\rho)$.
\item[(3)]$\dom(\rho^*)$ contains at least two elements. 
\item[(4)]$\dom(\rho^*)\cap L^1$ contains at least two elements.
\end{itemize}
Moreover, they all imply
\begin{itemize}
    \item[(5)]There is no constant $\beta>0$ such that $\rho\le\E_\Q[\cdot]+\beta$.
\end{itemize}
\end{proposition}

Note that point (5) in Proposition~\ref{compatible nonempty} means that an admissible consistent risk measure is necessarily more conservative than determining the expected loss under the reference measure $\Q$ and adding a safety margin. 
It therefore provides a simple check to {\em disprove} admissibility. However, it is only a sufficient, not a necessary condition; cf.\ Example~\ref{ex:onlyone}.

An even simpler positive condition guaranteeing admissibility is available if $\rho$ is additionally star shaped.  
Admissibility then boils down to the simple demand that $\rho$ does not agree with the expectation under $\Q$. 
The assumptions of Theorems~\ref{thm:risksh_mon} and \ref{thm:application2} are therefore particularly mild for star-shaped consistent risk measures.

\begin{proposition}\label{uncountably many}
Let $\Q\approx\P$ be a probability measure and $\rho\colon L^\infty\to\R$ be a $\Q$-consistent and star-shaped  risk measure.
Then the following are equivalent: 
\begin{itemize}
\item[(1)]$\rho\neq\E_\Q[\cdot]$.
\item[(2)]$\rho$ is admissible.
\end{itemize}
\end{proposition}

Note that Example~\ref{ex:onlyone} shows that the preceding Proposition~\ref{uncountably many} fails without the assumption of star-shapedness. Further illustrative examples are provided in Example~\ref{ex:many} below.

\begin{remark}\
\begin{enumerate}[(1)]
\item Suppose $\rho$ is admissible. Then \eqref{eq:is compatible} together with points (2) and (4) of Proposition~\ref{compatible nonempty} yields that $\mbf C(\rho)$ contains uncountably many different elements and is dense in $\dom(\rho^*)\cap L^1$. 
\item From Propositions~\ref{compatible nonempty} and~\ref{uncountably many} one can conclude that a $\Q$-consistent risk measure $\rho$ is admissible if and only if its {\em biconjugate} $\rho^{**}\colon L^\infty\to\R$ defined by 
\[\rho^{**}(X)=\sup\{\phi(X)-\rho^*(\phi)\mid \phi\in\dom(\rho^*)\}\]
satisfies $\rho^{**}\neq\E_\Q[\cdot]$. This is due to $\rho^{**}$ being the ``convexification" of $\rho$, i.e., the largest convex $\Q$-consistent risk measure dominated by $\rho$ (cf.\ proof of \cite[Lemma F.3]{Collapse}). 
\item The spirit of Definition~\ref{def:compatible} and, more specifically, equivalence between (1) and (2) established in Proposition~\ref{uncountably many} establishes a link to the realm of ``collapse to the mean" results. The latter subsumes incompatibility results between law invariance of a functional and the existence of ``directions of linearity". The asymptotic cone of the acceptance set of a consistent risk measure can be understood to collect such directions. 
In the theory of risk measures, ``collapse to the mean results" go back at least to \cite{Emanuela}. We refer to the more detailed discussion in \cite{Collapse}. 
\end{enumerate}
\end{remark}

\section{Reference probabilities}\label{sec:heterogeneous}

Throughout the remainder of the paper, $\rho_i\colon L^\infty\to\R$ denotes a $\Q_i$-consistent risk measure used by agent $i\in[n]$, $n\ge 2$, for the assessment of net losses.
However, the reference measure $\Q_i$ may be specific to agent $i$.
The next crucial structural assumption therefore concerns the vector $(\Q_1,\ldots,\Q_n)$ of reference probability measures on $(\Omega,\CF)$ which are all equivalent to $\P$. 

\begin{assumption}\label{assum1} 
Each density $\tfrac{{\rm d}\Q_i}{{\rm d}\P}$, $i\in[n]$, has a version which is a simple function. 
\end{assumption}
Assumption~\ref{assum1} is equivalent to the existence of a partition $\pi\in\Pi$ such that 
\begin{equation}\label{used once}\Q_i(E)=\sum_{B\in\pi}\Q_i(B)\P^B(E),\quad E\in \CF.\end{equation}
Consider the finite set 
$\mathcal P:=\{\P^B\mid B\in\pi\}$
of mutually singular probability measures. 
If two random variables $X,Y\in L^\infty$ agree in distribution under each $\P^B\in\CP$, \eqref{used once} implies that $X\sim_{\Q_i} Y$, $i\in[n]$. 
Hence, $\rho_1,\ldots,\rho_n$ are {\em $\mathcal P$-based risk measures}, a notion recently introduced in \cite{WangZiegel}. Whenever two random variables $X,Y\in L^\infty$ satisfy $X\overset d =_{\P^B}Y$ for all $\P^B\in\mathcal P$, then $\rho_i(X)=\rho_i(Y)$, $i\in[n]$. The present case of mutually singular measures in $\mathcal P$ enjoys particular prominence in \cite{WangZiegel}. The main results in Section~\ref{sec:main} will make heavy use of Assumption~\ref{assum1} to adapt the procedure of comonotonic improvement to the present heterogeneous setting; cf.\ Appendix~\ref{sec:local}, in particular Remark~\ref{Hans}.
We therefore discuss Assumption~\ref{assum1} in more detail from four conceivable angles. 

First, it seems immediate to ask how Assumption~\ref{assum1} is  reflected by the $\Q_i$-consistent risk measures $\rho_i$ used by the agents in the risk sharing problem. 
A complete answer is provided by Theorem~\ref{thm:Pbased}, the main result of this section. 
It translates Assumption~\ref{assum1} as a relaxed notion of dilatation monotonicity common to all $\rho_i$. For more background on dilatation monotonicity, we refer to \cite{Cherny,Rahsepar}, and the references therein.

\begin{definition}
A function $\ph\colon L^\infty\to\R$ is {\em dilatation monotone above} a sub-$\sigma$-algebra $\CH\subset\CF$ if, for all $X\in L^\infty$ and all sub-$\sigma$-algebras $\CH\subset\CG\subset\CF$, 
\begin{equation}\label{eq:above}\ph(\E_\P[X|\CG])\le \ph(X).\end{equation}
\end{definition}
Note that the conditional expectation on the left-hand side of \eqref{eq:above} is computed under $\P$. If $\ph\colon L^\infty\to\R$ is dilatation monotone with respect to $\P$, it is therefore also dilatation monotone above each sub-$\sigma$-algebra $\CH\subset\CF$.

\begin{theorem}\label{thm:Pbased}
Suppose that $\rho_i\colon L^\infty\to\R$ are $\Q_i$-consistent risk measures, $\Q_i\approx\P$, $i\in[n]$.
Then the following are equivalent: 
\begin{itemize}
\item[(1)]The probability measures $(\Q_1,\ldots,\Q_n)$ can chosen to satisfy Assumption~\ref{assum1}.
\item[(2)]Each $\rho_i$ is dilatation monotone above a common finite $\sigma$-algebra $\CH\subset\CF$.
\end{itemize}
\end{theorem}

If $\CG$ is coarser than $\CF$, the conditional expectation $\E_\P[X|\CG]$ displays less variability than the initial random variable $X$. In this sense, $\CG$ can be seen as a gain of information while $\E_\P[X|\CG]$ is usually interpreted as the ``best approximation" of $X$ {\em under the probability measure $\P$} using the information provided by $\CG$. 
Dilatation monotonicity of a risk measure now rewards decreased variability by not increasing the measured risk. 
Item (2) in Theorem~\ref{thm:Pbased} retains this intuition \textit{provided} that the information is sufficient to decide which one of a set $\CH$ of reference events occurs.
In that case, each $\rho_i$ rewards replacing $X$ by its best approximation $\E_\P[X|\CG]$ under the universally shared probability model $\P$.
Otherwise, the information is deemed too coarse, and agents withdraw to their potentially heterogeneous models $\Q_i$. 

Second, each event $B$ in the finite measurable partition $\pi$ from Assumption~\ref{assum1} can be understood as the occurrence of an exogenous shock or a test event used in a backtesting procedure.  
\cite{Asimitetal}, for instance, speak about ``exogenous environments".
While agents disagree about the likelihood of those shocks, their respective relevance or conditional distributional implications are consensus.
For instance, \cite{Marshall} studies optimal insurance contracts under belief heterogeneity. 
A random loss is modelled by a random variable $X\ge 0$, the decision maker expresses probabilistic beliefs with a probability measure $\P$, the insurer with a probability measure $\Q$. One of the case studies in that paper, cf.\ \cite[Section 2]{Marshall}, assumes in our terminology that
\begin{align}\label{eq:Marshall1}\Q(X=0)&>\P(X=0)\quad\quad\text{and}\\
\P^{\{X>0\}}\circ X^{-1}&=\Q^{\{X>0\}}\circ X^{-1}\nonumber\end{align}
\eqref{eq:Marshall1} means that the decision maker is more optimistic about the absence of losses than the insurer. 
Comparing this to \eqref{used once}, one sees that the events $\{X=0\}$ and $\{X>0\}$ could play the role of shocks whose occurrence decision maker and insurer have potentially diverging opinions about, but whose consequences for the conditional distribution of $X$ are uniform. 
In a backtesting context,  \cite{Cambou} impose Assumption~\ref{assum1} {\em verbatim} in their study of {\em scenario aggregation}. The latter problem is faced by a financial company validating (or rejecting) their internal probabilistic model on the basis of evaluating selected adverse test events sufficiently conservatively; cf.\ \cite[Section 5]{Cambou}.
Last, but not least, \cite{WangZiegel} motivate scenario-based risk measures in the same vein.

Third, Assumption~\ref{assum1} is reminiscent of the famous {\em Anscombe-Aumann model} for the study of decisions under uncertainty; cf., e.g., \cite[Chapter 15]{Gilboa}. Without going into too much detail, this model's appeal is the clear separation between {\em ambiguity} about the occurrence of one in a finite set of baseline states, and {\em risk}, i.e., the random payoff drawn according to a state-dependent distribution (called ``lottery"). While this approach canonically models {\em acts} as stochastic kernels, the preceding setting can be interpreted as its translation into random variables.
The events in partition $\pi$ in Assumption~\ref{assum1} correspond to the states. Each random variable $X\in L^\infty$ induces a finite set of lotteries $\{\P^B\circ X^{-1}\mid B\in\pi\}$ on $\R$, depending on the particular state that obtains. These lotteries are ``objective" and recognised by all agents. The likelihoods $\Q_i(B)$ of the occurrence of states $B\in\pi$ are subjective and subject to potential disagreement. 
To our knowledge, this setting is rarely used for risk measures, exceptions being \cite{Drapeau} and \cite[Section 6]{FoeWeb}.

Fourth and last, the assumption can be phrased in the language of statistics. Recall that a sub-$\sigma$-algebra $\CH\subset\CF$ is {\em sufficient} for $\{\Q_1,\ldots,\Q_n\}$ if all bounded random variables $f\colon \Omega\to\R$ admit a common version of the conditional expectation $\E_{\Q_i}[f|\CH]$, $i\in[n]$. On an interpretational level, even the reduced information encoded in the small $\sigma$-algebra $\CH$ is sufficient to construct statistical decision procedures being able to distinguish between the elements in $\{\Q_1,\ldots,\Q_n\}$.

\begin{proposition}\label{prop:sufficient}
A vector $(\Q_1,\ldots,\Q_n)$ of equivalent probability measures on $(\Omega,\CF)$ satisfies Assumption~\ref{assum1} under $\P=\Q_1$ if and only if there is a finite sub-$\sigma$-algebra $\CH\subset\CF$ sufficient for $\{\Q_1,\ldots,\Q_n\}$.
\end{proposition}
For a more thorough discussion of the relation between statistics, decision making, and risk analysis, we refer to \cite{Statistics}.

In view of Proposition~\ref{prop:sufficient} it should be clear that under Assumption~\ref{assum1} $\P$ only plays the role of a weak gauge, determining with its null sets the equivalence class of nonatomic probability measures within which all $\Q_i$, $i\in[n]$, are located. 
In fact,  $\P$ can always be chosen among the $\Q_i$'s.
For our analysis it therefore serves as a somewhat arbitrary point of reference. 
However, as illustrated by Lemma~\ref{lem:is equiv} or Example~\ref{ex:big}(1) below, the assumption of equivalence among the $\Q_i$'s cannot be dropped. This should not surprise, as in a risk sharing scheme an agent could otherwise take on arbitrarily bad outcomes on a null set from their own perspective that is a relevant ground for improvement for other agents involved. The existence of such splitting procedures should rightly preclude the existence of optimal risk sharing schemes.

\section{The main results}\label{sec:main}

\subsection{Risk sharing with consistent risk measures}\label{sec:mon}

All results of this section are proved in Appendix~\ref{proofs of sec 3}. For $X\in L^\infty$, we denote by
\[\mathbb A_X:=\{\mbf X\in (L^\infty)^n\mid X_1+\ldots+X_n=X\}\]
the set of all {\em allocations} of $X$. 
Our problem of interest concerns $\Q_i$-consistent risk measures $\rho_i\colon\Linfty\to\R$, where $\Q_i\approx\P$ is some probability measure, $i\in[n]$.
For $X\in\Linfty$, we aim to solve the problem 
\begin{equation}\begin{array}{l}\label{problem1}\sum_{i=1}^n\rho_i(X_i)\longrightarrow\min\\
\text{subject to }\mbf X\in\mathbb A_X.\end{array}\end{equation}
The associated \emph{infimal convolution} 
$\rho:=\Box_{i\in[n]}\rho_i$
gives precisely the optimal value of \eqref{problem1}. The functional $\rho$ is known to be cash-additive and monotone, i.e., $\rho$ is a monetary risk measure if $\dom(\rho)\neq\varnothing$. 
An allocation $\mbf X\in\mathbb A_X$, $X\in L^\infty$, is {\em optimal} if $\mbf X$ is an optimiser in problem \eqref{problem1}, i.e.,
\[\rho(X)=\left(\Box_{i\in[n]}\rho_i\right)(X)=\sum_{i=1}^n\rho_i(X_i).\]
If an optimal allocation of $X$ exists, we say that $\rho$ is {\em exact} at $X$. 

\begin{theorem}\label{thm:risksh_mon}
Suppose:
\begin{itemize}
\item[(i)] A vector $(\Q_1,\ldots,\Q_n)$ of $\P$-equivalent probability measures satisfies Assumption~\ref{assum1}.
\item[(ii)]For each $i\in[n]$, $\rho_i\colon L^\infty\to\R$ is a $\Q_i$-consistent risk measure. 
\item[(iii)]The risk measures $\rho_1,\ldots,\rho_{n-1}$ are admissible, and for all $i\in[n-1]$ there is $Z_i\in\mbf C(\rho_i)$ such that, for all $k\in[n]$, $Z_i\in\dom(\rho_k^*)$. 
\end{itemize}
Then, for each $X\in L^\infty$, there is an optimal allocation $\mbf X\in\mathbb A_X$.
\end{theorem}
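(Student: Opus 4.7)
The plan is to follow the two-step scheme from the introduction---local comonotone improvement, then compactness---adapting both to the heterogeneous setting. Fix $X\in L^\infty$ and let $(\mbf X^{(k)})_{k\in\N}\subset\mathbb A_X$ be a minimising sequence for $\sum_i\rho_i(X_i)$. Finiteness of $\rho(X)$ from below is immediate: for $Z_1\in\mbf C(\rho_1)$ as in (iii), $\sum_i\rho_i(X_i^{(k)})\ge\E[Z_1 X]-\sum_i\rho_i^*(Z_1)>-\infty$. By Assumption~\ref{assum1} and Lemma~\ref{lem:conditional}, fix a partition $\pi\in\Pi$ such that all the conditional laws $\Q_i^B$, $B\in\pi$, induce the same equivalence of distributions on $L^\infty$. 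Cell by cell, the problem is then classical and law invariant with a common reference probability $\P^B$ (which is atomless, since $\P$ is), so Landsberger--Meilijson / Filipovi\'c--Svindland comonotone improvement applies on each $B$: replace $(X_i^{(k)}|_B)_i$ by $(f_i^{(k),B}(X|_B))_i$ where $f_i^{(k),B}\colon\R\to\R$ is $1$-Lipschitz and nondecreasing with $\sum_i f_i^{(k),B}=\mathrm{id}$, the new summands being convex-order dominated by the old ones under $\P^B$. Gluing yields $\tilde{\mbf X}^{(k)}\in\mathbb A_X$; since convex order implies SSD and $\E_{\Q_i}[v(\cdot)]=\sum_B\Q_i(B)\,\E_{\Q_i^B}[v(\cdot)]$ for convex nondecreasing $v$, the cellwise improvement transports to a global $\Q_i$-SSD inequality, so $\Q_i$-consistency of $\rho_i$ preserves the minimising property.

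The main obstacle is bounding $\|\tilde X_i^{(k)}\|_\infty$ in $k$: the ``rebalance cash'' trick from the single-belief setting is unavailable. Pick base points $x_0^B\in\mathrm{range}(X|_B)$ and decompose $\tilde X_i^{(k)}=Y_i^{(k)}+c_i^{(k)}$, where $Y_i^{(k)}|_B:=f_i^{(k),B}(X|_B)-f_i^{(k),B}(x_0^B)$ has $L^\infty$-norm bounded by $2\|X\|_\infty$ by the $1$-Lipschitz property, and $c_i^{(k)}:=\sum_{B\in\pi}f_i^{(k),B}(x_0^B)\,\ind_B$ is $\sigma(\pi)$-measurable. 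Suppose $M^{(k)}:=\max_{i,B}|f_i^{(k),B}(x_0^B)|\to\infty$ along a subsequence. The cell-constant vectors live in a finite-dimensional subspace of $L^\infty$, so a further extraction gives $c_i^{(k)}/M^{(k)}\to U_i$ in norm with $\sum_i U_i=0$ and some $U_{i_0}\neq 0$. The lower bound $\rho_i(\tilde X_i^{(k)})\ge\E[Z_1\tilde X_i^{(k)}]-\rho_i^*(Z_1)$ together with boundedness of $\sum_i\rho_i(\tilde X_i^{(k)})$ gives $\rho_i(\tilde X_i^{(k)})\le R_i$ for some $R_i\in\R$, so $\tilde X_i^{(k)}-R_i\in\CA_{\rho_i}$; since $Y_i^{(k)}/M^{(k)}\to 0$ and $R_i/M^{(k)}\to 0$ in norm, $(1/M^{(k)})(\tilde X_i^{(k)}-R_i)\to U_i$ in norm, so \eqref{eq:asymptotic} places $U_i\in\CA_{\rho_i}^\infty$. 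For each $i\in[n-1]$, assumption (iii) furnishes $Z_i\in\mbf C(\rho_i)$ with $Z_i\in\dom(\rho_k^*)$ for every $k\in[n]$. Lemma~\ref{lem:direction}(1) then yields $\E[Z_i U_k]\le 0$ for every $k$, and summing using $\sum_k U_k=0$ forces $\E[Z_i U_k]=0$ for every $k$. In particular $\E[Z_i U_i]=0$ with $U_i\in\CA^\infty_{\rho_i}$, so compatibility of $Z_i$ gives $U_i=0$ for every $i\in[n-1]$; then $U_n=0$ as well, contradicting $U_{i_0}\neq 0$.

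Once the cell-constants $f_i^{(k),B}(x_0^B)$ are uniformly bounded, Arzel\`a--Ascoli applied to the family of $1$-Lipschitz, nondecreasing functions $\{f_i^{(k),B}\}_k$ on the compact interval containing $\mathrm{range}(X|_B)$ extracts, along a further subsequence, uniform limits $f_i^{(k),B}\to f_i^B$. The limits are still $1$-Lipschitz and nondecreasing with $\sum_i f_i^B=\mathrm{id}$, so $X_i^\star:=\sum_{B\in\pi}f_i^B(X|_B)\,\ind_B$ defines an element of $\mathbb A_X$, and uniform convergence on each cell gives $\tilde X_i^{(k)}\to X_i^\star$ in $L^\infty$-norm. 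Norm-continuity of each monetary risk measure $\rho_i$ then delivers $\sum_i\rho_i(X_i^\star)=\lim_k\sum_i\rho_i(\tilde X_i^{(k)})=\rho(X)$, so $\mbf X^\star$ is the desired optimal allocation.
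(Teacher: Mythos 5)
The argument breaks down at the boundedness step, specifically at the claim that ``boundedness of $\sum_i\rho_i(\tilde X_i^{(k)})$ gives $\rho_i(\tilde X_i^{(k)})\le R_i$''. The lower bound $\rho_j(\tilde X_j^{(k)})\ge \E[Z_1\tilde X_j^{(k)}]-\rho_j^*(Z_1)$ is not uniform in $k$: under your contradiction hypothesis the $\tilde X_j^{(k)}$ are unbounded, so these lower bounds may drift to $-\infty$ and no finite $R_i$ bounding $\rho_i(\tilde X_i^{(k)})$ from above can be extracted. In fact the claim is false in general: by cash-additivity, replacing $(\tilde X_1^{(k)},\tilde X_2^{(k)},\dots)$ by $(\tilde X_1^{(k)}+a_k,\tilde X_2^{(k)}-a_k,\dots)$ with $a_k\to\infty$ leaves the aggregate risk unchanged, and since adding cell-wise constants preserves local comonotonicity, the improvement step may return exactly such a drifting sequence. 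For that sequence your normalised limits are nonzero constants, e.g.\ $U_1=\ind_\Omega$, which never lies in $\CA_{\rho_1}^\infty$ (any $\phi\in\dom(\rho_1^*)$ gives $\phi(\ind_\Omega)=1>0$, contradicting Lemma~\ref{lem:direction}(1)); so the conclusion $U_i\in\CA_{\rho_i}^\infty$ is not available and the contradiction evaporates --- correctly so, because such a minimising sequence genuinely does not converge even though an optimal allocation exists. Note also that your symmetric argument needs $U_k\in\CA_{\rho_k}^\infty$ for \emph{all} $k\in[n]$, including $k=n$, for which assumption~(iii) provides no compatible element at all.

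The repair is to run the Dieudonn\'e-type compactness argument at the level of acceptance sets rather than along an arbitrary minimising sequence of the inf-convolution: for coordinates that already lie in $\CA_{\rho_i}$ one has the \emph{uniform} bounds $\E[Z\,Y_i^k]\le\rho_i^*(Z)$ for $Z\in\dom(\rho_i^*)$, which is exactly what drives the contradiction (this is Proposition~\ref{prop:convergence}). One then shows that $\sum_{i=1}^n\CA_{\rho_i}$ is norm closed, identifies it with $\CA_\rho$ (recentring each piece by its own risk, which is harmless there because only the aggregate $X-\rho(X)$ must be reproduced), and finally rebalances the single cash amount $\rho(X)$ among the agents. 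Equivalently, within your scheme you would have to replace $\tilde X_i^{(k)}$ by $\tilde X_i^{(k)}-\rho_i(\tilde X_i^{(k)})\in\CA_{\rho_i}$ and work with the resulting sequence, whose sum $X-\sum_i\rho_i(\tilde X_i^{(k)})$ is bounded; the limiting allocation then solves the problem for $X-\rho(X)$ and must be shifted back by $\rho(X)/n$ per agent. The remaining ingredients of your proposal --- the cell-by-cell comonotone improvement with the identity $\E_{\Q_i}[v(\cdot)]=\sum_B\Q_i(B)\E_{\Q_i^B}[v(\cdot)]$, the finite-dimensionality of the span of $\{\ind_B\mid B\in\pi\}$, and the Arzel\`a--Ascoli extraction of limiting comonotone partitions of the identity --- do match the paper's Lemma~\ref{lem:1}, Proposition~\ref{prop:convergence}, and the compactness step in the proof of Theorem~\ref{thm:risksh_mon}, but as written the central boundedness claim is unjustified and the proof is incomplete.
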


\subsubsection{Discussion of assumption (iii)}\label{sec:discussion}

Assumptions (i) and (ii) of Theorem~\ref{thm:risksh_mon} have been discussed in detail in Sections~\ref{sec:consistent}--\ref{sec:heterogeneous}. We therefore turn directly to assumption (iii). 
Assume for the moment that {\em all} risk measures $\rho_1,\ldots,\rho_n$ are admissible. Then (iii) is implied by the simpler condition 
\begin{equation}\label{eq:special}\forall\,i,k\in[n]:\quad\rho^*_k(\tfrac{{\rm d}\Q_i}{{\rm d}\P})<\infty.\end{equation}
This requirement has a clear interpretation if all $\rho_i$'s are classical convex risk measures, in which case the probability densities in $\dom(\rho_i^*)\cap L^1$ are ``plausible probabilistic models that are taken more or less seriously" (\cite[p.\ 308]{FoeWeb}) by agent $i$ depending on the size of $\rho^*_i$. Each reference model $\frac{{\rm d}\Q_i}{{\rm d}\P}$ plays a fundamental role for agent $i$, which is underscored by the fact that $\rho_i^*\left(\tfrac{{\rm d}\Q_i}{{\rm d}\P}\right)=0$, i.e., $\tfrac{{\rm d}\Q_i}{{\rm d}\P}$ enjoys maximal plausibility from the point of view of $i$. 
\eqref{eq:special} then means that $\frac{{\rm d}\Q_i}{{\rm d}\P}$ is also deemed somewhat plausible by the other agents. 
We can adopt this thinking in the more general consistent case, in which formula~\eqref{eq:conjugate} holds for $\rho_i^*$. 
However, Example~\ref{ex:add1} demonstrates that two admissible risk measures $\rho_1$ and $\rho_2$ can satisfy all assumptions of Theorem~\ref{thm:risksh_mon}, but fail \eqref{eq:special}. 

While~\eqref{eq:special} is too restrictive, its perspective can inform the interpretation of assumption (iii) itself. Given $i\in[n-1]$ and the admissible risk measure $\rho_i$,
its compatible elements $\mbf C(\rho_i)$ are dense in $\dom(\rho_i^*)\cap L^1$, but more plausible in the sense of being less extreme than models outside $\mbf C(\rho_i)$. Assumption (iii) means that within this set, we find a density $Z_i$ that enjoys a certain degree of confidence by all agents involved. 

Another noteworthy aspect of assumption (iii) is that only $\rho_1,\ldots,\rho_{n-1}$ are assumed to be admissible. This permits to include, for instance, expected loss assessments that price risk linearly, i.e., the $\Q_n$-consistent, but not admissible risk measure $\rho_n=\E_{\Q_n}[\cdot]$. 
If $\rho_n$ is not admissible, Proposition~\ref{compatible nonempty}(3) simplifies  assertion (iii)  substantially to the requirement
\begin{equation}\label{eq:intersection}\tfrac{{\rm d}\Q_n}{{\rm d}\P}\in\bigcap_{i=1}^{n-1}\mbf C(\rho_i).\end{equation}
Moreover, condition \eqref{eq:special} is precluded unless we are in a homogeneous situation.
We can nevertheless also solve the optimal allocation problem if {\em more than one} agent measures risk with a non-admissible risk measure. The crucial condition is that all non-admissible agents share the same reference probability measure. 
Assume that $\Q_1,\ldots,\Q_n$ are the reference measures described by Assumption~\ref{assum1}. Let $K\in\N$ and suppose $\tau_1,\ldots,\tau_K$ are $\Q_n$-consistent risk measures which are not admissible, while $\rho_1,\ldots,\rho_{n-1}$ are admissible consistent risk measures.
Set $\w\rho:=\Box_{j=1}^K\tau_j$ and note that by Proposition~\ref{compatible nonempty}(3), \[\dom(\w\rho^*)=\dom\Big(\sum_{j=1}^K\tau_j^*\Big)=\bigcap_{j=1}^K\dom(\tau_j^*)=\{\tfrac{{\rm d}\Q_n}{{\rm d}\P}\}.\]
Assumptions (i)--(iii) of Theorem~\ref{thm:risksh_mon} are then met by the agents $(\rho_1,\ldots,\rho_K,\w\rho)$ if and only if \eqref{eq:intersection} holds, and we may allocate $X\in L^\infty$ optimally among them by Theorem~\ref{thm:risksh_mon}.
In a second step, let $Y_n$ be the share attributed to the fictitious last agent measuring risk with $\w\rho$.
\cite[Theorem 4.1]{Consistent} proves the existence of optimal allocations for $\Q_n$-consistent risk measures, i.e., for the case of homogeneous beliefs, and also admits ``linear agents" mentioned above. 
Thus we can choose $Z_1,\ldots,Z_K\in L^\infty$ with the properties
\[Y_n=\sum_{i=1}^{K}Z_i\quad\text{and}\quad\sum_{i=1}^K\tau_i(Z_i)=\w\rho(Y_n).\]
Setting $X_i$ to be the share initially attributed to agent $i\in[n-1]$, it is straightforward to show that $(X_1,\ldots,X_{n-1},Z_1,\ldots,Z_K)$ is an optimal allocation of $X$ among $\rho_1,\ldots,\rho_{n-1},\tau_1,\ldots,\tau_K$. 

\subsubsection{Necessity of assumptions}

The homogeneous case also shows that law invariance alone or combined with star-shapedness is too weak to guarantee the existence of optimal allocations.\footnote{~I am indebted to Ruodu Wang for pointing out the corresponding result in~\cite{Many}.}
As an illustration, consider constants $\alpha,\beta>0$ with $\alpha+\beta<1$. Let $\rho_i\colon L^\infty\to\R$, $i=1,2$, be defined by 
\[\begin{array}{l}\rho_1(X):=\inf\{x\in\R\mid \P(X\le x)>1-\alpha\},\\
\rho_2(X):=\tfrac 1 {\beta}\int_{1-\beta}^1 q_X(s)\,{\rm d}s.\end{array}\]
Here $q_\bullet$ denotes a version of the quantile function under $\P$. Both functionals are law invariant with respect to the reference measure $\P$, cash-additive, normalised, positively homogeneous and thus star-shaped. However, $\rho_2$ is a consistent risk measure, while $\rho_1$ is not consistent. 
At last, no $X\in L^\infty$ with continuous distribution can be allocated Pareto optimally between $\rho_1$ and $\rho_2$ (\cite[Proposition 1]{Many}). 

Example~\ref{ex:big} continues the illustration of the necessity of the assumptions of Theorem~\ref{thm:risksh_mon} in case $n=2$.

\subsubsection{Shape of optimal allocations}\label{sec:shape}

One of the key findings in many monetary risk sharing situations under a single homogeneous reference measures is that {\em comonotone} optimal risk allocations can be found.  
More precisely, let $\mf C$ denote the set of all {\em comonotone functions} functions $\mbf f:\R\to\R^n$ satisfying that each coordinate $f_i$ is non-decreasing and that $\sum_{i=1}^nf_i=id_\R$. 
Consequently, each coordinate $f_i$ of $\mbf f\in\mf C$ is 1-Lipschitz continuous. 
For $\mbf f\in\mf C$, we abbreviate by $\widetilde{\mbf f}$ the normalised function $\mbf f-\mbf f(0)$.
Then comonotone allocations $\mbf f(X)$ are obtained by applying a suitable $\mbf f\in\mf C$ to the aggregated quantity $X$.

The proofs in Appendix~\ref{proofs of sec 3} demonstrate that optimal allocations will usually {\em not} be comonotone under heterogeneous reference measures, marking a substantial difference between the heterogeneous and homogeneous  case.
As an illustration, consider an arbitrary $\Q\approx\P$, positive constants $\beta_1,\beta_2>0$, and the convex risk measures
\[\begin{array}{l}\rho_1(X):=\tfrac{1}{\beta_1}\log(\E_\P[e^{\beta_1X}]),\\
\rho_2(X):=\tfrac1{\beta_2}\log(\E_\Q[e^{\beta_2X}]),\end{array}\quad\quad X\in\Linfty.\]
Then $\rho_1\Box\rho_2$ is exact at each $X\in\Linfty$, and the unique optimal risk allocation $(X_1,X_2)$ is given by
\[\begin{array}{l}X_1=\tfrac{\beta_2}{\beta_1+\beta_2}X+\tfrac{1}{\beta_1+\beta_2}\log(\tfrac{{\rm d}\Q}{{\rm d}\P}),\\
X_2=\tfrac{\beta_1}{\beta_1+\beta_2}X-\tfrac{1}{\beta_1+\beta_2}\log(\tfrac{{\rm d}\Q}{{\rm d}\P}).\end{array}\]
The dependence on the density $\tfrac{{\rm d}\Q}{{\rm d}\P}$ precludes comonotonicity \textit{unless} $\tfrac{{\rm d}\Q}{{\rm d}\P}\equiv 1$, which is the case if and only if $\Q=\P$.

In order to justify the desirability of comonotone allocations, let us decompose each risk portion $f_i(X)$ of a comonotone allocation $\mbf f(X)$ into two parts: $f_i(0)$, a deterministic cost imposed on or capital injected into the position of agent $i$; and the remainder $\widetilde{f}_i(X)$. In actuarial terminology, $\widetilde f_i$ is an \textit{indemnity function}. $\widetilde f_i(X)$ reflects minimal rationality and fairness considerations of the agents involved in the risk sharing scheme. 
If $X>0$ (i.e., a loss is produced), $0\le \widetilde f_i(X)\le X$; if $X<0$ (i.e., a gain is obtained), then $X\le \widetilde f_i(X)\le 0$. 
The allocated net losses thus do not exceed the total net loss, and no agent can symmetrically make a profit if the system at large suffers a loss. 
Moreover, in the tradition of \cite{Huberman}, the nondecreasing nature of $\widetilde f_i$ is interpreted as the absence of {\em ex post} moral hazard potentially incentivising agents to misreport their losses.\footnote{~As \cite[p.\ 423]{Huberman} write, 
``[o]nce damage has occurred, they encourage the insured both to create further damage and to inflate the size of the claim. For example, in automobile insurance, once an automobile is damaged, if further damage cannot be distinguished as extra damage, the insurance company will compensate for the combined damage. Moreover, disappearing deductibles reinforce incentives to do things like present inflated repair estimates and demand more expensive health care."}
By definition, the deterministic cash transfers $(f_i(0))_{i\in[n]}\in\R^n$ satisfy $\sum_{i=1}^nf_i(0)=0$. Given that individual risk measures are cash-additive, deterministic capital transfers among agents---which could be perceived as unfair---can be eliminated without losing optimality, leading to the new optimal allocation $\widetilde{\mbf f}(X)$ of $X$. 
This ``rebalancing of cash" can alternatively be justified by solving a second optimisation problem and selecting a optimal comonotone function $\mbf g\in\mf C$ whose cash transfers $\mbf g(0)$ are  closest to a uniform distribution among agents; i.e., one additionally minimises the function  
\[\Xi\colon\begin{array}{l}\mf C\to\R,\\
\mbf f\mapsto\min\{\sum_{i=1}^n|f_i(0)-x|^2\mid x\in\R\},\end{array}\]
among all $\mbf f\in\mf C$ defining an optimal allocation. 

In the heterogeneous case and under Assumption~\ref{assum1}, the optimal allocations whose existence is verified in Theorem~\ref{thm:risksh_mon} are only {\em locally comonotone} (Definition~\ref{def:locally comonotone}); that is, if $\pi\in\Pi$ is as in \eqref{used once}, one finds an optimal  family $(\mbf f^B)_{B\in\pi}\subset\mf C$ such that 
$X_i:=\sum_{B\in\pi}f_i^B(X)\ind_B$, $i\in[n]$,
defines an optimal allocation. Two observations ensue: The potential loss of comonotonicity is a consequence of the heterogeneity of reference measures, not (primarily) of properties of the risk measures. Second, locally comonotone allocations split into a $\pi$-dependent cash transfer reflecting the heterogeneity structure, $\sum_{B\in\pi}f_i(0)\ind_B$, and $B$-dependent indemnity schedules applied to the aggregated loss $X$, $\sum_{B\in\pi}\widetilde f_i(X)\ind_B$. 

Note that the risk measures $\rho_i$, $i\in[n]$, cannot behave additively on span$\{\ind_B\mid B\in\pi\}$ by \cite[Theorem 5.7]{Collapse}. Hence, rebalancing of cash and eliminating cash transfers among agents in the way described above becomes impossible. 
This important distinctive feature of the heterogeneous case has been emphasised multiple times in this manuscript already. 
As a workaround, one can take the alternative approach to rebalancing cash, select a quality criterion $\Xi$ measuring ``unfairness" and minimising it among all optimal allocations. 
\begin{corollary}\label{cor:problem2}
In the situation of Theorem~\ref{thm:risksh_mon} let $\pi\in\Pi$ be as in \eqref{used once}. Let $\Xi\colon\mf C\to\R$ be bounded below and continuous with respect to sequential pointwise convergence on $\mf C$. 
Consider the optimisation problem 
\begin{equation}\begin{array}{l}\label{problem2}\Xi\left((\mbf f^B)_{B\in\pi}\right)\longrightarrow\min\\
\text{subject to }(\mbf f^B)_{B\in \pi}\in\mf C^\pi,~\sum_{i=1}^n\rho_i\left(\sum_{B\in\pi}f_i^B(X)\ind_B\right)=(\Box_{i=1}^n\rho_i)(X).\end{array}\end{equation}
Then \eqref{problem2} has a solution. 
\end{corollary}

\subsubsection{Infimal convolution preserves $\CP$-basedness}

Let us once again consider the situation of a homogeneous reference measure $\P$ in which comonotone optimal allocations can be found. 
As an important consequence, one can show that the infimal convolution preserves $\P$-law invariance, i.e., the optimal value in \eqref{problem1} depends only on the distribution of the aggregated loss $X$ under $\P$. 
However, Theorem~\ref{thm:risksh_mon} also applies to heterogeneous reference measures. Moreover, the lack of comonotonicity of optimal allocations outlined in Section~\ref{sec:shape} shows that one cannot expect the infimal convolution to be law invariant with respect to some reference measure. 
Instead, Corollary~\ref{cor:Pbased} shows that the infimal convolution operation preserves {\em $\mathcal P$-basedness} of the individual risk measures discussed in Section~\ref{sec:heterogeneous}. This is an important addendum to \cite{WangZiegel}. 
\begin{corollary}\label{cor:Pbased} 
In the situation of Theorem~\ref{thm:risksh_mon} let $\pi\in\Pi$ be a finite measurable partition as in \eqref{used once}. Consider the set
\[\CP:=\{\P^B\mid B\in\pi\}\]
of conditional probability measures. Then the infimal convolution $\rho=\Box_{i=1}^n\rho_i$ is $\CP$-based: If for $X,Y\in L^\infty$ and all $B\in\pi$ we have $X\sim_{\P^B}Y$, then $\rho(X)=\rho(Y)$. 
\end{corollary}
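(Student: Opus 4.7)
The strategy is to first show that each individual $\rho_i$ is $\mathcal P$-based, and then to transfer this property to the infimal convolution $\rho$ via a coupling argument that exploits Theorem~\ref{thm:risksh_mon} and the atomlessness of $\P$ on each atom of $\pi$.

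For the first step, observe that $\Q_i$-consistency implies $\Q_i$-law invariance: if $X\overset d=_{\Q_i}Y$, then $X$ and $Y$ second-order dominate each other under $\Q_i$, so applying Definition~\ref{def:consistent}(b) twice yields $\rho_i(X)=\rho_i(Y)$. Combining this with the observation from item~(2) of Section~\ref{sec:heterogeneous}---which in turn uses that $\tfrac{d\Q_i}{d\P}$ is $\sigma(\pi)$-measurable and hence constant on each $B\in\pi$, so that $\E_{\Q_i}[f(X)]=\sum_{B\in\pi}\Q_i(B)\E_{\P^B}[f(X)]$ for bounded measurable $f$---one deduces that $X\overset d=_{\P^B}Y$ for every $B\in\pi$ implies $X\overset d=_{\Q_i}Y$, whence $\rho_i(X)=\rho_i(Y)$. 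Thus each $\rho_i$ is $\mathcal P$-based.

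For the second step, fix $X,Y\in L^\infty$ with $X\overset d=_{\P^B}Y$ for every $B\in\pi$, and invoke Theorem~\ref{thm:risksh_mon} to obtain an optimal allocation $\mbf X\in\mathbb A_X$ with $\rho(X)=\sum_{i=1}^n\rho_i(X_i)$. The plan is to build an allocation $\mbf Y\in\mathbb A_Y$ satisfying $X_i\overset d=_{\P^B}Y_i$ for every $i\in[n]$ and every $B\in\pi$. Proceeding atom by atom, I note that for each $B\in\pi$ the restricted space $(B,\CF|_B,\P^B)$ is atomless because $\P$ is. Since $X|_B\overset d=_{\P^B}Y|_B$, a standard coupling argument on atomless probability spaces---realised by introducing a uniform auxiliary random variable and using the regular conditional distribution of $(X_1,\dots,X_n)$ given $X$---produces random variables $Y_1^B,\dots,Y_n^B\in L^\infty(B,\CF|_B,\P^B)$ such that $\sum_{i=1}^nY_i^B=Y|_B$ and the $\P^B$-joint law of $(Y_1^B,\dots,Y_n^B)$ coincides with that of $(X_1|_B,\dots,X_n|_B)$; in particular the marginals match. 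Gluing via $Y_i:=\sum_{B\in\pi}Y_i^B\ind_B$ yields $\mbf Y\in\mathbb A_Y$ with $X_i\overset d=_{\P^B}Y_i$ for each $i$ and $B$. By the first step, $\rho_i(X_i)=\rho_i(Y_i)$, so $\rho(Y)\le\sum_i\rho_i(Y_i)=\sum_i\rho_i(X_i)=\rho(X)$. Symmetry (swapping the roles of $X$ and $Y$, which is legitimate because $X\overset d=_{\P^B}Y$ is symmetric and Theorem~\ref{thm:risksh_mon} applies to $Y$ as well) yields the reverse inequality and hence equality.

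The main obstacle is the coupling step: the construction of the $Y_i^B$ requires a genuine joint-distributional transport rather than only marginal equidistribution, and some care is needed to ensure measurability when gluing across the finitely many atoms $B\in\pi$. Both issues are handled by the standard Skorokhod-type construction on atomless spaces (enlarging the probability space by an independent uniform variable if necessary and using a measurable selector for the conditional law of $(X_1,\dots,X_n)$ given $X$ on each atom). Everything else in the argument is essentially bookkeeping, so once the local coupling is in place the conclusion follows directly.
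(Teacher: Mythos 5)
Your first step (each $\rho_i$ is $\CP$-based, via $\Q_i$-consistency $\Rightarrow$ $\Q_i$-law invariance and the $\sigma(\pi)$-measurability of the densities) is correct, and so is the final symmetry argument. The gap is in the coupling step, and it is a genuine one. You take an \emph{arbitrary} optimal allocation $\mbf X$ of $X$ and try to transport it to $Y$ on each atom $B\in\pi$ by realising the regular conditional law of $(X_1,\dots,X_n)$ given $X$ at the argument $Y$. That construction needs a randomisation variable (a uniform independent of $Y$) on $(B,\CF|_B,\P^B)$, and atomlessness of $\P$ does \emph{not} provide one: if $\CF|_B=\sigma(Y|_B)$ with $Y|_B$ continuously distributed, the restricted space is atomless yet every random variable on it is a deterministic function of $Y$, so no non-degenerate conditional randomisation can be implemented. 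Concretely, take $B$ with $\P^B$ the uniform law, $Y(\omega)=\omega$ generating $\CF|_B$, $X:=2Y \bmod 1$ (so $X\overset d=_{\P^B}Y$), $X_1:=\ind_{\{Y\ge 1/2\}}$ and $X_2:=X-X_1$; then $X_1$ is Bernoulli$(1/2)$ and $X_2$ is uniform on $[-1,1]$ under $\P^B$, but any $Y_1,Y_2$ on this space with $Y_1+Y_2=Y$ and $Y_1\overset d=_{\P^B}X_1$ forces $Y_1=\ind_A(Y)$ for some $A$ of measure $1/2$, and then the law of $Y_2=Y-\ind_A(Y)$ has a $\{0,1\}$-valued density and can never be uniform on $[-1,1]$. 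So even the marginal matching you need can fail on the given space. Your fallback of ``enlarging the probability space by an independent uniform'' is not available either: the infimum defining $\rho(Y)$ ranges over allocations in $L^\infty(\Omega,\CF,\P)$, and showing that the inf-convolution value is unchanged under such an enlargement is essentially the content of the statement you are trying to prove.

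The repair is exactly the ingredient your plan bypasses: the locally comonotone structure. By Lemma~\ref{lem:1} (and the proof of Theorem~\ref{thm:risksh_mon}, whose optimal allocation is of this form up to constants), the infimal convolution satisfies $\rho(X)=\min_{(\mbf f^B)_{B\in\pi}\subset\mf C(n)}\sum_{i=1}^n\rho_i\bigl(\sum_{B\in\pi}f_i^B(X)\ind_B\bigr)$, i.e.\ one may restrict to allocations that are deterministic functions of the aggregate on each atom. For such allocations no coupling is needed at all: $X\overset d=_{\P^B}Y$ implies $f_i^B(X)\overset d=_{\P^B}f_i^B(Y)$, hence $\sum_B f_i^B(X)\ind_B\overset d=_{\Q_i}\sum_B f_i^B(Y)\ind_B$, and $\Q_i$-law invariance of $\rho_i$ gives $\rho(X)=\rho(Y)$. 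This is the paper's proof; your argument becomes correct once you replace the conditional-law transport by this reduction, but as written the transport step does not go through.
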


\subsection{Risk sharing with capital requirements}

Proofs and auxiliary results accompanying this subsection---which presents the second main result of the paper---can be found in Appendix~\ref{last appendix}. 
Here we focus on risk sharing with capital requirements based on multidimensional security spaces---sometimes also called {\em multi-asset risk measures}. 
The formulation of this problem---which is thoroughly studied in \cite{Liebrich} under the assumption of convexity and has similarities to the C\&R transfers suggested in \cite{FilKup}---requires \textit{risk measurement regimes},
a notion we adopt from the aforementioned paper to the present setting. 

For a vector $(\Q_1,\ldots,\Q_n)$ of probability measures equivalent to $\P$, $\rho_i\colon L^\infty\to\R$ are $\Q_i$-consistent risk measures,
$i\in[n]$. 
\emph{Security spaces} are finite-dimensional subspaces $\Scal_i\subset\Linfty$ which contain a nontrivial $S_i\ge 0$, $i\in[n]$. Elements $S\in\mathcal S_i$ are called  \emph{securities}.
$\price_i:\Scal_i\to\R$ are linear and positive \textit{pricing functionals}, mapping at least one positive security to a positive price. 
If these assumptions are satisfied and, for all $X\in L^\infty$,
\[\sup\{\price_i(S)\mid S\in\mathcal S_i,\ X+S\in\CA_{\rho_i}\}<\infty,\]
$\mathcal R_i:=(\CA_{\rho_i},\Scal_i,\price_i)$ is a \textit{risk measurement regime}, $i\in[n]$, and it is called  {\em admissible} if $\rho_i$ is an admissible $\Q_i$-consistent risk measure. 

These risk measurement regimes induce \textit{capital requirements} defined---in the spirit of \cite{Baes,FKM2015,FritScand}---as 
\[\eta_i(X):=\inf\{\price_i(S)\mid S\in\mathcal S_i,\,X-S\in\CA_{\rho_i}\}.\]
The value $\eta_i(X)$ has an immediate operational interpretation as the infimal amount of capital which needs to be raised and invested in suitable securities $S\in\mathcal S_i$ available in the security market at price $\price_i(S)$ such that the secured net loss profile $X-S$ passes the capital adequacy test posed by $\CA_{\rho_i}$. Note also that the risk measures $\rho_i$ merely play the role of gauges determinining acceptability. Their precise values do not matter for the computation of the $\eta_i$'s. 
In contrast to monetary risk measures, the capital requirements $\eta_i$ may neither only attain finite values nor be cash-additive. In Theorem~\ref{thm:application2} below, we shall impose the following assumption.

\begin{assumption}\label{assum:RMR}
There is $Z^*\in\bigcap_{i=1}^n\mbf C(\rho_i)$ such that, for all $i\in[n]$ and all $S\in\Scal_i$, 
$$\price_i(S)=\E_\P[Z^*S].$$
\end{assumption}

Concerning the satisfiability of Assumption~\ref{assum:RMR}, suppose that each $\rho_i$ is admissible and $\Q_i$-consistent and that, for all $i,k\in[n]$, $\tfrac{{\rm d}\Q_i}{{\rm d}\P}\in\dom(\rho_k^*)$.
Then $Z^*$ as in Assumption~\ref{assum:RMR} exists; for instance, $Z^*:=\sum_{i=1}^n\delta_i\tfrac{{\rm d}\Q_i}{{\rm d}\P}$ for arbitrary choices of positive convex combination parameters $\delta_i>0$, $i\in[n]$ (Proposition~\ref{compatible nonempty}).

\begin{remark}
Assumption~\ref{assum:RMR} requires further elaboration.
Let us take for a moment the perspective of a market, i.e., random variables represent net gains (or payoffs) instead of net losses. 
$\mathcal S_i$ collects (portfolios of) assets in which agent $i$ can trade. 
Acceptance set $\CA_{\rho_i}$ represents agent $i$'s preferences collecting payoffs they {\em would like to have}; a payoff is desired if it belongs to $-\CA_{\rho_i}$.
Common quality criteria to assess pricing rules are absence of arbitrage opportunities and, more generally, ``absence of good deals". 
A good deal for agent $i$ is a nontrivial desirable payoff that has at most zero cost, i.e., $X\in -\CA_{\rho_i}\cap\{S\in\mathcal S_i\mid \price_i(S)\le 0\}$. 
A key difference to arbitrage opportunities is that these are agent-independent and defined via monotonicity of preferences; good deals, on the other hand, are highly dependent on the agent's preferences. For this reason, the recent working paper \cite{Arduca1} criticises the absence of good deals on economic grounds. Instead, 
the more natural condition is the absence of so-called ``scalable good deals", and a natural counterpart to the classical Fundamental Theorem of Asset Pricing can be established using that condition. 
Absence of scalable good deals means here that nontrivial elements $X\in-\CA_{\rho_i}^\infty\cap\{S\in\mathcal S_i\mid\price_i(S)\le 0\}$ do not exist. 
While Assumption~\ref{assum:RMR} is not the absence of scalable acceptable deals, it is a dual condition that posits the existence of a pricing density preventing  scalable acceptable deals simultaneously for all heterogeneous agents in a market. 
\end{remark}

Under Assumption~\ref{assum:RMR}, the capital requirements $\eta_i$ are typically far from consistent. Even more, they lose law invariance of the acceptance set $\CA_{\rho_i}$, a point already emphasised in the introduction. 
The risk sharing problem in the present subsection is therefore remarkable insofar as it delivers the existence of optimal allocations both without convexity and law invariance assumptions. 
The preceding point is made precise in the following observation:

\begin{proposition}\label{not li}
Suppose $(\CA_\rho,\mathcal S,\price)$ is an admissible risk measurement with associated $\P$-consistent risk measure $\rho$. Moreover, assume that $\price=\phi|_{\mathcal S}$ for some $\phi\in\dom(\rho^*)$.
Then the following are equivalent: 
\begin{itemize}
\item[(1)]$\mathcal S=\R$.
\item[(2)]$\eta$ is $\P$-law invariant.
\end{itemize}
\end{proposition}

\begin{remark}
Proposition~\ref{not li} complements and, where it applies, sharpens substantially the similar result \cite[Proposition 5.8]{Bellini}. Note that the additional condition that $\price=\phi|_{\mathcal S}$ for some $\phi\in\dom(\rho^*)$ is mild and equivalent to the associated capital requirement $\eta$ satisfying $\dom(\eta^*)\neq \varnothing$. For instance, $\phi$ exists if $\eta$ is convex and only takes finite values. 
\end{remark}

Define $\mathcal M:=\sum_{i=1}^n\Scal_i$ and $\pi\colon \mathcal M\to\R$ by $\pi(S)=\E_\P[Z^*S]$, $S\in\mathcal M$. 
The couple $(\mathcal M,\pi)$ forms the \textit{global security market}.
Like in Theorem~\ref{thm:risksh_mon}, we define the infimal convolution $\eta:=\Box_{i=1}^n\eta_i\colon L^\infty\to\R$ by
\[\eta(X)=\inf_{\mbf X\in\mathbb A_X}\eta_i(X_i),\]
to find optimal allocations, 
and ask if $\eta$ takes finite values and is exact at a given $X\in L^\infty$. 
The following theorem generalises \cite[Theorem 5.6]{Liebrich}. 

\begin{theorem}\label{thm:application2}
Suppose: 
\begin{itemize}
\item[(i)] The underlying vector $(\Q_1,\ldots,\Q_n)$ of probability measures satisfies Assumption~\ref{assum1}. 
\item[(ii)] The risk measurement regimes $(\mathcal R_1,\ldots,\mathcal R_n)$ satisfy Assumption~\ref{assum:RMR}.
\item[(iii)] $\mbf U\in\prod_{i=1}^n\Scal_i$ such that $U:=\sum_{i=1}^nU_i$ satisfies $\pi(U)=1$.
\end{itemize}
Then for all $X\in\sum_{i=1}^n\dom(\eta_i)$ there is $\mbf N\in\prod_{i=1}^n\Scal_i$ and $\mbf A\in\prod_{i=1}^n\CA_{\rho_i}$ such that $N:=\sum_{i=1}^nN_i\in\ker(\pi)$,
and $\mbf X\in\mathbb A_X$ defined by 
\[
X_i:=A_i+N_i+\eta(X)U_i,\quad i\in[n],
\]
satisfies 
\begin{equation}\label{eq:shape}\eta(X)=\sum_{i=1}^n\eta_i(X_i).\end{equation}
\end{theorem}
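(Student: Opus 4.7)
The plan is to reduce the exactness of $\eta = \Box_{i=1}^n \eta_i$ to arguments parallel to those used for Theorem~\ref{thm:risksh_mon}. First, for $X \in \sum_{i=1}^n \dom(\eta_i)$ I would confirm $\eta(X) > -\infty$: Assumption~\ref{def:RMR} provides $Q^* \in \bigcap_{i=1}^n \mbf C(\rho_i)$ with $\rho_i^*(Q^*) < \infty$, and whenever $Y \in L^\infty$ and $Z \in \Scal_i$ satisfy $Y - Z \in \CA_{\rho_i}$, the conjugate inequality yields $\E[Q^* Y] - \pi(Z) = \E[Q^*(Y-Z)] \le \rho_i(Y-Z) + \rho_i^*(Q^*) \le \rho_i^*(Q^*)$, hence $\eta_i(Y) \ge \E[Q^* Y] - \rho_i^*(Q^*)$. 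Summing over $i$ delivers a linear lower bound on the total risk of every allocation.

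Next, I would pick a minimizing sequence $\mbf X^{(k)} \in \mathbb A_X$ with $\sum_i \eta_i(X_i^{(k)}) \to \eta(X)$. For each pair $k, i$, choose $Z_i^{(k)} \in \Scal_i$ with $A_i^{(k)} := X_i^{(k)} - Z_i^{(k)} \in \CA_{\rho_i}$ and $\pi(Z_i^{(k)}) \le \eta_i(X_i^{(k)}) + 1/k$. Setting $c^{(k)} := \pi(\sum_i Z_i^{(k)})$ (so $c^{(k)} \to \eta(X)$) and $N_i^{(k)} := Z_i^{(k)} - c^{(k)} U_i$, one has $\pi(\sum_i N_i^{(k)}) = 0$ together with the identity
\begin{equation*}
X = \sum_{i=1}^n A_i^{(k)} + \sum_{i=1}^n N_i^{(k)} + c^{(k)} U.
\end{equation*}

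The crux is extracting convergent subsequences $A_i^{(k)} \to A_i \in \CA_{\rho_i}$ and $N_i^{(k)} \to N_i \in \Scal_i$ with $\sum_i N_i \in \ker(\pi)$. Since each $\Scal_i$ is finite-dimensional, bounding $\mbf N^{(k)}$ reduces to controlling the $A_i^{(k)}$. For the latter I would use Assumption~\ref{assum1} to fix a finite measurable partition as in Lemma~\ref{lem:conditional}, and combine a partition-wise (local) comonotone improvement---preserving each $\rho_i$-value by the $\mathcal P$-basedness observed in Section~\ref{sec:heterogeneous}---with a Dieudonn\'e-type argument leveraging compatibility: any residual direction of unboundedness of $A_i^{(k)}$ would have to lie in $\CA_{\rho_i}^\infty$, but property (b) in Definition~\ref{def:compatible} applied to $Q^*$ precludes it from carrying nonzero $Q^*$-expectation, forcing boundedness. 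This is essentially the compactness step in the proof of Theorem~\ref{thm:risksh_mon} and constitutes the main technical obstacle; comonotone allocations are not typically preserved in the heterogeneous setting, so the local comonotone rearrangement must be tied carefully to the exogenous-shock partition before the collapse-to-the-mean mechanism can dispose of the asymptotic directions.

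With convergence in hand, norm-closedness of $\CA_{\rho_i}$ (from norm-continuity of $\rho_i$) and finite-dimensionality of $\Scal_i$ and $\ker(\pi)$ yield limits $\mbf A \in \prod_i \CA_{\rho_i}$ and $\mbf N \in \prod_i \Scal_i$ with $\sum_i N_i \in \ker(\pi)$. Setting $X_i := A_i + N_i + \eta(X) U_i$ gives $\sum_i X_i = X$, and acceptability of $A_i$ together with $\pi|_{\Scal_i} = \price_i$ implies
\begin{equation*}
\eta_i(X_i) \le \pi(N_i + \eta(X) U_i) = \pi(N_i) + \eta(X) \pi(U_i).
\end{equation*}
Summing and invoking $\pi(\sum_i N_i) = 0$ and $\pi(U) = 1$ yields $\sum_i \eta_i(X_i) \le \eta(X)$, which combined with the trivial reverse inequality establishes \eqref{eq:shape}.
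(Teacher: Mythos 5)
Your skeleton (linear lower bound via $Q^*$, minimizing sequence rewritten as acceptable part $+$ kernel part $+$ $c^{(k)}U$, a compactness step, then the pricing computation yielding \eqref{eq:shape}) is the same as the paper's, and your first and last steps are correct as written. The genuine gap is the compactness step, which you describe as ``essentially the compactness step in the proof of Theorem~\ref{thm:risksh_mon}''. It is not: Proposition~\ref{prop:convergence} bounds locally comonotone allocations of a \emph{norm-bounded} aggregate sequence, whereas here the aggregate of your acceptable parts is $\sum_i A_i^{(k)}=X-c^{(k)}U-\sum_i N_i^{(k)}$, and the security/kernel parts are only \emph{price}-controlled, not norm-controlled; they may blow up in tandem with the $A_i^{(k)}$. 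Consequently the argument you invoke does not apply as stated, and your reduction ``bounding $\mbf N^{(k)}$ reduces to controlling the $A_i^{(k)}$'' is circular until the two sources of unboundedness are disentangled. This extra difficulty is precisely what distinguishes Theorem~\ref{thm:application2} from Theorem~\ref{thm:risksh_mon}.

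Relatedly, your use of Definition~\ref{def:compatible}(b) is off: compatibility does not ``preclude nonzero $Q^*$-expectation'' of a residual direction; it says that a direction in $\CA_{\rho_i}^\infty$ with \emph{zero} $Q^*$-expectation must vanish. The whole work is to prove that the candidate direction has zero $Q^*$-expectation, and that is exactly where the kernel enters: one must normalize the acceptable and kernel parts jointly, use finite-dimensionality of $\ker(\pi)$ (or $\Mcal$) to extract a limiting kernel direction $N^*$ with $\E[Q^*N^*]=\pi(N^*)=0$, note $\E[Q^*V_i]\le 0$ for the agents' asymptotic directions by Lemma~\ref{lem:direction}(1), and use that all these directions sum to zero to force $\E[Q^*V_i]=0$ agent by agent before compatibility can be applied. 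The paper packages this as closedness of $\CA_\rho+\ker(\pi)$ and, since compatibility is an agent-wise property, additionally needs Lemma~\ref{lem:sumasymptotic} ($\CA_\rho^\infty\subset\sum_i\CA_{\rho_i}^\infty$), proved via the star-shaped hulls of Lemma~\ref{lem:star}; neither this nor a substitute (e.g., tracking the comonotonically improved pieces individually through the joint normalization) appears in your sketch. A minor further slip: the local comonotone improvement keeps pieces acceptable by $\Q_i$-consistency (Lemma~\ref{lem:1}), not by $\mathcal P$-basedness, and it does not preserve the $\rho_i$-values, it only does not increase them.
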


The assumptions of Theorem~\ref{thm:application2} cannot be relaxed, as Example~\ref{ex:add2} demonstrates.

\appendix

\section{Auxiliary results}\label{appendix}

Before we give the proofs accompanying results from Sections~\ref{sec:consistent}--\ref{sec:main}, this appendix collects structural properties of $\Q$-law-invariant risk measures where the probability measure $\Q$ may not agree with the gauge probability measure $\P$. 
While the first set of results in Lemma~\ref{lem:structure} is standard if $\Q=\P$, we shall provide the short proof in the general case for the convenience of the reader.

\begin{lemma}\label{lem:structure}
Let $\Q\approx\P$ be a probability measure and $\rho\colon L^\infty\to\R$ be a $\Q$-consistent risk measure. 
\begin{itemize}
\item[(1)]For all $Q\in L^1$,
\begin{equation*}\rho^*(Q)=\rho^\sharp\big((\tfrac{{\rm d}\Q}{{\rm d}\P})^{-1}Q\big),\end{equation*}
where 
\begin{equation}\label{eq:rhosharp}\rho^\sharp(Z)=\sup_{X\in L^\infty}\E_\Q[ZX]-\rho(X),\quad Z\in L^1_\Q.\end{equation}
Moreover, the function $\rho^\sharp$ is $\Q$-law invariant.
\item[(2)]For all $Q\in L^1\cap L^1_\Q$ and all sub-$\sigma$-algebras $\CG\subset\CF$ such that $\tfrac{{\rm d}\Q}{{\rm d}\P}$ has a $\CG$-measurable version, 
\[\rho^*(\E_\P[Q|\CG])\le \rho^*(Q).\]
\end{itemize}
\end{lemma}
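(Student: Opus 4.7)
The plan is to prove the three parts in order, each relying on direct manipulation of the change-of-measure density $D := \tfrac{d\Q}{d\P}$ (which is $\P$-a.s.\ strictly positive since $\Q\approx\P$) together with the defining properties of $\Q$-consistency.

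For (1), I would begin by rewriting $\E_\Q[ZX]=\E[DZX]$, so that the substitution $Q:=DZ$ (equivalently $Z=D^{-1}Q$) immediately converts the supremum defining $\rho^\sharp(D^{-1}Q)$ into the one defining $\rho^*(Q)$, noting that $Z\in L^1_\Q$ if and only if $Q\in L^1$. For the $\Q$-law invariance of $\rho^\sharp$, I would first observe that every $\Q$-consistent risk measure is itself $\Q$-law invariant: if $X\overset d=_\Q Y$, each of $X,Y$ dominates the other in $\Q$-second-order stochastic dominance. Next, $(\Omega,\CF,\Q)$ inherits atomlessness from $(\Omega,\CF,\P)$ via equivalence, so for $Z_1\overset d=_\Q Z_2$ and any $X\in L^\infty$ a standard rearrangement argument produces $X'\in L^\infty$ with $(Z_2,X')\overset d=_\Q(Z_1,X)$; then $\E_\Q[Z_2X']=\E_\Q[Z_1X]$ and $\rho(X')=\rho(X)$, which yields $\rho^\sharp(Z_1)\le\rho^\sharp(Z_2)$, and equality follows by symmetry.

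Part (2) is the standard Jensen argument: for every convex nondecreasing $v\colon\R\to\R$, Jensen's inequality under $\Q$ gives
\[
\E_\Q\bigl[v(\E_\Q[X|\CG])\bigr]\le \E_\Q\bigl[\E_\Q[v(X)|\CG]\bigr]=\E_\Q[v(X)],
\]
so $X$ $\Q$-second-order stochastically dominates $\E_\Q[X|\CG]$, and $\Q$-consistency delivers $\rho(\E_\Q[X|\CG])\le\rho(X)$.

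For (3), I would combine (1) and (2) via the dual side. The key preliminary identity is
\[
D^{-1}\,\E[Q|\CG]=\E_\Q[D^{-1}Q\,|\,\CG],
\]
which I would verify by testing against any bounded $\CG$-measurable $Y$: using that $D$ is $\CG$-measurable,
\[
\E_\Q\bigl[Y\cdot D^{-1}\E[Q|\CG]\bigr]=\E\bigl[Y\,\E[Q|\CG]\bigr]=\E[YQ]=\E_\Q[Y\cdot D^{-1}Q].
\]
Part (1) then gives $\rho^*(\E[Q|\CG])=\rho^\sharp(\E_\Q[D^{-1}Q|\CG])$. By the tower property,
\[
\E_\Q\bigl[X\cdot\E_\Q[D^{-1}Q|\CG]\bigr]=\E_\Q\bigl[\E_\Q[X|\CG]\cdot D^{-1}Q\bigr],
\]
and applying part (2) to bound $\rho(X)$ from below by $\rho(\E_\Q[X|\CG])$, we estimate
\[
\rho^\sharp(\E_\Q[D^{-1}Q|\CG])\le\sup_{X\in L^\infty}\E_\Q[\E_\Q[X|\CG]\cdot D^{-1}Q]-\rho(\E_\Q[X|\CG])\le \rho^\sharp(D^{-1}Q)=\rho^*(Q),
\]
where in the middle inequality the supremum over $X$ is dominated by the supremum over all $Y\in L^\infty$ (taking $Y=\E_\Q[X|\CG]$). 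The main obstacle is the $\CG$-measurability of $D$, without which the identity relating $D^{-1}\E[Q|\CG]$ to a $\Q$-conditional expectation fails and the argument breaks down; once this identity is in hand, parts (1)--(2) combine mechanically.
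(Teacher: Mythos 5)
Your computation in (1) of the identity $\rho^*(Q)=\rho^\sharp\big((\tfrac{d\Q}{d\P})^{-1}Q\big)$, your Jensen argument for (2), and your argument for (3) are all sound, but the justification of the $\Q$-law invariance of $\rho^\sharp$ contains a step that is false in general: on an arbitrary atomless space, given $Z_1\overset d=_\Q Z_2$ and $X\in L^\infty$, there need \emph{not} exist $X'$ with $(Z_2,X')\overset d=_\Q (Z_1,X)$. For a counterexample take $\Omega=[0,1]^2$ with Lebesgue measure, $\Q=\P$, $Z_1(\omega)=\omega_1$, $X(\omega)=\omega_2$, and let $Z_2$ be a Borel isomorphism of $[0,1]^2$ onto $[0,1]$ carrying $\P$ to Lebesgue measure: then $Z_1\overset d= Z_2$, but every candidate $X'$ is a.s.\ $\sigma(Z_2)$-measurable, so $(Z_2,X')$ can never reproduce the law of the independent pair $(Z_1,X)$. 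Matching joint laws requires randomness independent of $Z_2$, which an abstract atomless space need not supply. The standard repair---and essentially the content of the result the paper cites at this point, \cite[Lemma 3.2]{General}---is to use the Hardy--Littlewood bound: on an atomless space, $\sup_{X'\overset d=_\Q X}\E_\Q[Z_2X']=\int_0^1 q_{Z_2}(s)q_X(s)\,ds$ (quantiles under $\Q$), which depends on $Z_2$ only through its $\Q$-law; combined with your correct observation that $\rho$ itself is $\Q$-law invariant (mutual second-order dominance), this gives $\E_\Q[Z_1X]-\rho(X)\le\rho^\sharp(Z_2)+\eps$ for every $X$ and $\eps>0$, hence $\rho^\sharp(Z_1)\le\rho^\sharp(Z_2)$ and, by symmetry, law invariance. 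So the conclusion stands, but the joint-distribution claim as written is not a valid argument and should be replaced by the supremum/rearrangement formulation.

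Your part (3) is correct and takes a genuinely different route from the paper. The paper first observes $\E_\Q[Q|\CG]=\E[Q|\CG]$ via the Bayes rule and then invokes an external dilatation-monotonicity theorem for convex, $\Q$-law-invariant, lower semicontinuous functionals (\cite[Theorem 3.6]{General}) applied to $\rho^\sharp$, together with pulling the $\CG$-measurable density out of the conditional expectation. You instead prove the inequality from scratch: the identity $(\tfrac{d\Q}{d\P})^{-1}\E[Q|\CG]=\E_\Q[(\tfrac{d\Q}{d\P})^{-1}Q\,|\,\CG]$ (your verification against bounded $\CG$-measurable test variables, using $\CG$-measurability and a.s.\ positivity of the density, is correct), the tower property, and part (2). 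This is more self-contained, does not use law invariance of $\rho^\sharp$ at all, and is in effect a direct proof of the dilatation monotonicity that the paper imports; it also covers the stated hypothesis $Q\in L^1\cap L^1_\Q$ since only $Q\in L^1$ is needed. Part (2) coincides with what the paper intends by ``direct consequence of the definition''.
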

\begin{proof}\
\begin{enumerate}[(1)]
\item Let $Q\in L^1$ and compute
\begin{align*}\rho^*(Q)&=\sup_{X\in L^\infty}\E_\P[QX]-\rho(X)=\sup_{X\in L^\infty}\E_\Q\big[(\tfrac{{\rm d}\Q}{{\rm d}\P})^{-1}QX\big]-\rho(X)=\rho^\sharp\big((\tfrac{{\rm d}\Q}{{\rm d}\P})^{-1}Q\big).
\end{align*}
Law invariance with respect to $\Q$ of the function $\rho^\sharp$ follows, for instance, from \cite[Lemma 3.2]{General}. 
\item Let $\CG\subset\CF$ be a sub-$\sigma$-algebra such that $\tfrac{{\rm d}\Q}{{\rm d}\P}$ has a $\CG$-measurable version. Let $Q\in L^1\cap L^1_\Q$. Using the Bayes rule for conditional expectations, 
\[
\tfrac{{\rm d}\Q}{{\rm d}\P}\E_\Q[Q|\CG]=\E_\P[\tfrac{{\rm d}\Q}{{\rm d}\P}Q|\CG]=\tfrac{{\rm d}\Q}{{\rm d}\P}\E_\P[Q|\CG].
\]
$\Q\approx\P$ implies that $\E_\Q[Q|\CG]=\E_\P[Q|\CG]$. 
The function $\rho^\sharp$ defined in \eqref{eq:rhosharp} is convex, $\Q$-law invariant, proper (i.e., $\dom(\rho^\sharp)\neq\varnothing$), and $\sigma(L^1_\Q,L^\infty_\Q)$-lower semicontinuous by definition. 
Hence, \cite[Theorem 3.6]{General} admits to infer for all $Q\in L^1$ that
\begin{align*}\rho^*(Q)&=\rho^\sharp\big((\tfrac{{\rm d}\Q}{{\rm d}\P})^{-1}Q\big)\ge \rho^\sharp\big(\E_\Q\big[(\tfrac{{\rm d}\Q}{{\rm d}\P})^{-1}Q|\CG\big]\big)=\rho^\sharp\big((\tfrac{{\rm d}\Q}{{\rm d}\P})^{-1}\E_\Q[Q|\CG]\big)\\
&=\rho^*(\E_\Q[Q|\CG])=\rho^*(\E_\P[Q|\CG]).
\end{align*}
\end{enumerate}
\end{proof}

\begin{lemma}\label{lem:direction}
Let $\Q\approx\P$ be a probability measure and $\rho\colon L^\infty\to\R$ be a $\Q$-consistent risk measure. 
\begin{itemize}
\item[(1)]For all $\phi\in\dom(\rho^*)$ and all $U\in\CA_\rho^\infty$, $\phi(U)\le 0$ holds. 
\item[(2)]$\CA_\rho^\infty$ is closed, $\Q$-law-invariant, and closed under taking conditional expectations with respect to $\Q$. 
\end{itemize}
\end{lemma}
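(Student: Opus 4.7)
My plan is to dispatch the inequality in (1) and the three set-theoretic properties in (2) in turn. The common engine is that any $U \in \CA_\rho^\infty$ admits, by definition \eqref{eq:asymptotic}, a pair $(s_k, Y_k) \in (0,\infty) \times \CA_\rho$ with $s_k \to 0$ and $\|s_k Y_k - U\|_\infty \to 0$. For (1), for every $Y \in \CA_\rho$ the conjugate obeys $\rho^*(\phi) \geq \phi(Y) - \rho(Y) \geq \phi(Y)$ because $\rho(Y) \leq 0$. Norm-continuity of the bounded linear functional $\phi$ then yields
\[
\phi(U) \;=\; \lim_{k \to \infty} s_k \phi(Y_k) \;\leq\; \lim_{k \to \infty} s_k \rho^*(\phi) \;=\; 0,
\]
the last equality using finiteness of $\rho^*(\phi)$.

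For (2), \emph{closedness} is a diagonal extraction: if $(U^{(n)})_n \subset \CA_\rho^\infty$ converges to $U$ in $\|\cdot\|_\infty$, pick $s_n \in (0, 1/n]$ and $Y_n \in \CA_\rho$ with $\|s_n Y_n - U^{(n)}\|_\infty \leq 1/n$, and conclude $s_n Y_n \to U$. \emph{Closedness under conditional expectations} combines Lemma~\ref{lem:structure}(2)---which forces $\E_\Q[Y \mid \CG] \in \CA_\rho$ whenever $Y \in \CA_\rho$---with the fact that $\E_\Q[\cdot \mid \CG]\colon L^\infty \to L^\infty$ is a linear $\|\cdot\|_\infty$-contraction (exploiting $\Q \approx \P$ to identify $L^\infty$ with $L^\infty_\Q$). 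Passing the approximating sequence for $U$ through this contraction yields $s_k \E_\Q[Y_k \mid \CG] \to \E_\Q[U \mid \CG]$ with each $\E_\Q[Y_k \mid \CG] \in \CA_\rho$, hence $\E_\Q[U \mid \CG] \in \CA_\rho^\infty$.

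The main obstacle is $\Q$-\emph{law-invariance}. I would first record that $\CA_\rho$ is itself $\Q$-law-invariant: if $X \overset d=_\Q Y$, then each dominates the other in the $\Q$-convex order, so $\Q$-consistency gives $\rho(X) = \rho(Y)$. Now let $V \overset d=_\Q U$ with $U \in \CA_\rho^\infty$. Since $\Q \approx \P$, the space $(\Omega, \CF, \Q)$ inherits atomlessness, and one may invoke the standard Halmos--von~Neumann transfer principle to produce a $\Q$-measure-preserving endomorphism $T$ of $\Omega$ with $V = U \circ T$ $\Q$-a.s.; this step, which relies on the richness of the underlying measure algebra implicit throughout the law-invariant risk-measure literature, is where the analytic delicacy sits. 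Because $T$ preserves the $\P$-essential supremum (via $\P \approx \Q$), composing the approximating sequence with $T$ yields $s_k (Y_k \circ T) \to V$ in $\|\cdot\|_\infty$ with each $Y_k \circ T \in \CA_\rho$ by the established $\Q$-law-invariance of $\CA_\rho$, whence $V \in \CA_\rho^\infty$.
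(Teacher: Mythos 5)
Parts (1), the closedness of $\CA_\rho^\infty$, and the closure under $\E_\Q[\cdot|\CG]$ in your proposal are correct and essentially identical to the paper's arguments (the paper uses Jensen plus $\Q$-consistency where you cite Lemma~\ref{lem:structure}(2), which is the same point). The genuine gap is the $\Q$-law-invariance step. Your transfer claim --- that for any $U,V\in L^\infty$ with $V\overset d=_\Q U$ there is a $\Q$-measure-preserving map $T\colon\Omega\to\Omega$ with $V=U\circ T$ --- is false in general, even on a standard atomless space. Take $\Q=\P$ to be Lebesgue measure on $\Omega=[0,1]^2$, let $U(\omega_1,\omega_2)=\omega_1$, and let $V$ be a measure-preserving Borel isomorphism from $[0,1]^2$ onto $[0,1]$, so that $U$ and $V$ are both uniformly distributed. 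Any measure-preserving $T=(T_1,T_2)$ with $U\circ T=V$ must satisfy $T_1=V$, and measure preservation then forces $T_2$ to be uniform and independent of $T_1=V$; but $\sigma(V)$ equals the whole $\sigma$-algebra modulo null sets, so $T_2$ is a.s.\ a function of $V$ and cannot be nonconstant and independent of it. Such a $T$ exists only under additional richness conditions relative to $\sigma(V)$, and the paper does not even assume a standard (separable) space, so Halmos--von Neumann-type point realisations are not available anyway. Note also that $\Q$-law invariance of $\CA_\rho$ does not by itself pass to the asymptotic cone: the entire difficulty is to transport the approximating sequence $s_kY_k\to U$ into one converging to $V$, and this is exactly what your argument does not achieve.

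The paper circumvents this obstacle altogether: having established that $\CA_\rho^\infty$ is closed and stable under $\Q$-conditional expectations (as you do), it considers the distance function $F(X)=\inf_{U\in\CA_\rho^\infty}\|X-U\|_\infty$, observes that $F$ is norm-continuous and $\Q$-dilatation monotone, and invokes the Cherny--Grigoriev theorem (\cite[Theorem 1.1]{Cherny}) that such functionals are law invariant; $\Q$-law invariance of $\CA_\rho^\infty=F^{-1}(\{0\})$ follows. To salvage a direct transport argument you would need to prove a transfer statement of the kind you assert under extra hypotheses (or an approximate version of it compatible with the norm-closedness of $\CA_\rho^\infty$), which your proposal does not supply.
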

\begin{proof}\
\begin{enumerate}[(1)]
\item Let $\phi$ and $U$ be as described. Let $(s_k)_{k\in\N}\subset(0,\infty)$ be a null sequence and $(Y_k)_{k\in\N}\subset\CA_\rho$ such that $\lim_{k\to\infty}s_kY_k=U$. 
Recall that $\rho^*(\phi)=\sup_{Y\in\CA_\rho}\phi(Y)\geq \phi(0)=0$. 
Hence,
\[\phi(U)=\lim_{k\to\infty}s_k\phi(Y_k)\le \rho^*(\phi)\cdot\lim_{k\to\infty}s_k=0.\]
\item It is straightforward to verify that $\CA_\rho^\infty$ is closed. In order to see that $\CA_\rho^\infty$ is closed under taking conditional expectations with respect to $\Q$, fix $U\in\CA_\rho^\infty$ and an arbitrary sub-$\sigma$-algebra $\CG\subset\CF$.
Let  $(s_k)_{k\in\N}\subset(0,\infty)$ and $(Y_k)_{k\in\N}\subset\CA_\rho$ be sequences as in the proof of (1) with $\lim_{k\to\infty}s_kY_k=U$.
By Jensen's inequality and the monotonicity of $\rho$ in second order stochastic dominance relation with respect to $\Q$, $\E_\Q[Y_k|\CG]\in\CA_\rho$ holds for all $k\in\N$. Hence,  
\[\E_\Q[U|\CG]=\lim_{k\to\infty}\E_\Q[s_kY_k|\CG]=\lim_{k\to\infty}s_k\E_\Q[Y_k|\CG]\in\CA_\rho^\infty.\]
The function
$F\colon L^\infty=L^\infty_\Q\to\R_+$ defined by $F(X)=\inf_{U\in\CA_\rho^\infty}\|X-U\|_\infty$
is continuous (because $\CA_\rho^\infty$ is closed) and $\Q$-dilatation monotone. 
Indeed, for every $X\in L^\infty$ and every sub-$\sigma$-algebra $\CG\subset\CF$, 
\begin{align*}
    F\left(\E_\Q[X|\CG]\right)&\le\inf_{U\in\CA_\rho^\infty}\left\|\E_\Q[X|\CG]-\E_\Q[U|\CG]\right\|_\infty=\inf_{U\in\CA_\rho^\infty}\left\|\E_\P[X-U|\CG]\right\|_\infty\le\inf_{U\in\CA_\rho^\infty}\|X-U\|_\infty=F(X).
\end{align*}
Hence, by \cite[Theorem 1.1]{Cherny}, $F$ is $\Q$-law invariant. 
This translates to $\Q$-law-invariance of $\CA_\rho^\infty=F^{-1}(\{0\})$.
\end{enumerate}
\end{proof}

The next lemma reveals that also the set of compatible elements is stable under computing certain conditional expectations.

\begin{lemma}\label{lem:main}
Let $\Q\approx\P$ be a probability measure and $\rho\colon L^\infty\to\R$ be a $\Q$-consistent risk measure. 
Suppose $\CG\subset\CF$ is a sub-$\sigma$-algebra such that $\tfrac{{\rm d}\Q}{{\rm d}\P}$ has a $\CG$-measurable version.
Then, for all $Z\in\mbf C(\rho)$, 
\begin{center}$\E_\P[Z|\CG]\in\mbf C(\rho)$.\end{center}
\end{lemma}
\begin{proof}
As there is nothing to show if $\mbf C(\rho)=\varnothing$, we may assume that $\rho$ is admissible. Let $\CG\subset\CF$ be a sub-$\sigma$-algebra as described. By Lemma~\ref{lem:structure}(2), $\E_\P[Z|\CG]\in\dom(\rho^*)$. Moreover, as $\Q\approx\P$, we may apply the Bayes rule for conditional expectations to compute
\begin{equation}\label{eq:Bayes}\E_\Q[Z|\CG]=\left(\E_\P[\tfrac{{\rm d}\Q}{{\rm d}\P}|\CG]\right)^{-1}\E_\P[\tfrac{{\rm d}\Q}{{\rm d}\P}Z|\CG]=\left(\tfrac{{\rm d}\Q}{{\rm d}\P}\right)^{-1}\cdot\tfrac{{\rm d}\Q}{{\rm d}\P}\E_\P[Z|\CG]=\E_\P[Z|\CG].\end{equation}
Now suppose $U\in\CA_\rho^\infty$ satisfies $\E_\P[\E_\P[Z|\CG]U]=0$. $\E_\P[Z|\CG]\in\mbf C(\rho)$ follows if we can show $U=0$. 
Switching the conditioning and arguing as in \eqref{eq:Bayes} yields 
\[0=\E_\P[\E_\P[Z|\CG]U]=\E_\P[Z\E_\Q[U|\CG]],\]
and $\E_\Q[U|\CG]\in\CA_\rho^\infty$ holds by Lemma~\ref{lem:direction}(2). As $Z\in\mbf C(\rho)$, we infer $\E_\Q[U|\CG]=\E_\Q[U]=0$. As $\tfrac{{\rm d}\Q}{{\rm d}\P}\in\mbf C(\rho)$ by Proposition~\ref{compatible nonempty}(2), $U=0$ has to hold.  
\end{proof}

In the next preparatory result, we consider the operation of computing the star-shaped hull of a normalised monetary risk measure $\rho\colon L^\infty\to\R$, i.e., the functional $\rho_\star\colon L^\infty\to\R$ defined by
\begin{equation}\label{starshaped hull}\rho_\star(X):=\inf\{m\in\R\mid\exists\,s\in(0,1]:~X-m\in s\CA_\rho\}.\end{equation}
\begin{lemma}\label{lem:star}
Suppose $\rho\colon L^\infty\to\R$ is a normalised monetary risk measure with $\dom(\rho^*)\neq\varnothing$.
\begin{itemize}
\item[(1)] $\rho_\star$ is a star-shaped risk measure which satisfies 
\begin{equation}\label{eq:id1}\CA_{\rho_\star}=\tn{cl}\Big(\bigcup_{s\in(0,1]}s\CA_\rho\Big)\quad\text{and}\quad\rho_\star^*=\rho^*.\end{equation}
\item[(2)]If $\rho$ is $\Q$-consistent, then so is $\rho_\star$.
\item[(3)]
The asymptotic cones satisfy
\begin{equation}\label{eq:asymptoticequal}\CA_{\rho_\star}^\infty=\CA_\rho^\infty.\end{equation}
\item[(4)]If $\rho$ is $\Q$-consistent,
\[\mbf C(\rho)=\mbf C(\rho_\star).\]
In particular, $\rho$ is admissible if and only if $\rho_\star$ is admissible.
\end{itemize}
\end{lemma}
\begin{proof}\
\begin{enumerate}[(1)]
\item The condition $\dom(\rho^*)\neq\varnothing$ ensures that $\rho_\star(X)\in\R$ for all $X\in L^\infty$.
For monotonicity, let $X\le Y$ be two random variables, $m\in\R$, and $s\in(0,1]$. Then $Y-m\in s\CA_\rho$ implies by monotoncity that $s^{-1}(X-m)\in\CA_\rho$, or equivalently, $X-m\in s\CA_\rho$. This suffices to prove $\rho_\star(X)\le\rho_\star(Y)$.
For cash-additivity, let $X\in L^\infty$, $k\in\R$, and observe that 
\begin{align*}\rho_\star(X+k)&=\{m\in\R\mid\exists\,s\in(0,1]:~ X+k-m\in s\CA_\rho\}\\
&=\{l\in\R\mid\exists\,s\in(0,1]:~X-l\in s\CA_\rho\}+k=\rho_\star(X)+k.
\end{align*}
Finally, in order to verify star-shapedness, let  $X\in L^\infty$ and $\lambda\in(0,1)$ and observe
\[\{\lambda\cdot m\mid m\in\R,\,\exists\,s\in(0,1]:~X-m\in s\CA_\rho\}\subset\{k\in\R\mid \exists\,s\in(0,1]:~\lambda X-k\in s\CA_\rho\},\]
which means $\rho_\star(\lambda X)\le\lambda\rho_\star(X)$.\\
For the first assertion in \eqref{eq:id1}, $\rho_\star(sY)\le 0$ holds for all $s\in(0,1]$ and all $Y\in\CA_\rho$, which means 
\[\tn{cl}\Big(\bigcup_{s\in(0,1]}s\CA_\rho\Big)\subset\CA_{\rho_\star}\]
by closedness of the right-hand set.
Conversely, suppose $\rho_\star(X)\le 0$, which means we must be able to find $(s_n)_{n\in\N}\in(0,1]$ such that $X-\tfrac 1 n\in s_n\CA_\rho$, $n\in\N$. Hence,
\[X=\lim_{n\to\infty}X-\tfrac 1 n\in\tn{cl}\Big(\bigcup_{s\in(0,1]}s\CA_\rho\Big).\]
For the second part of \eqref{eq:id1}, let  $\phi\in (L^\infty)^*$. Then
\begin{align*}\rho_\star^*(\phi)&=\sup_{R\in\CA_{\rho_\star}}\phi(R)=\sup_{Y\in\CA_\rho}\sup_{s\in(0,1]}s\phi(Y)=\sup_{s\in(0,1]}s\rho^*(\phi)=\rho^*(\phi).
\end{align*}
\item 
Suppose $X$ is dominated by $X'$ in the second-order stochastic dominance relation and $m\in\R$ is such that $X'-m=sY$ for some $s\in(0,1]$ and $Y\in\CA_\rho$. One verifies that $s^{-1}(X-m)$ is dominated by $s^{-1}(X'-m)=Y\in\CA_\rho$ in the second-order stochastic dominance relation. Hence, $X-m\in s\CA_\rho$ as well. We infer that 
\begin{align*}
\rho_\star(X)&=\inf\{k\in\R\mid \exists\,s\in(0,1]:~X-k\in s\CA_\rho\}\\
&\le\inf\{m\in\R\mid \exists\,s\in(0,1]:~X'-m\in s\CA_\rho\}=\rho_\star(X'),
\end{align*}
which proves consistency of $\rho_\star$.
\item In order to verify \eqref{eq:asymptoticequal}, note first that $\CA^\infty_\rho\subset\CA^\infty_{\rho_\star}$ follows from identity \eqref{eq:id1}.
Conversely, suppose that $U\in\CA_{\rho_\star}^\infty$, which means that for a null sequence $(s_n)_{n\in\N}\subset (0,1)$ and $(Y_n)_{n\in\N}\subset\CA_{\rho_\star}$, $U=\lim_{n\to\infty}s_nY_n$. Choosing $(t_n)_{n\in\N}\subset(0,1]$ and $(Z_n)_{n\in\N}\subset\CA_\rho$ appropriately, we can guarantee that $U=\lim_{n\to\infty}s_nt_nZ_n$. This is sufficient for $U\in\CA_\rho^\infty$. 
\item 
By \eqref{eq:id1},  $\dom(\rho^*_\star)=\dom(\rho^*)$. Hence, a probability density $Z^*\in L^1_+$ satisfies $\rho^*(Z^*)<\infty$ if and only if $\rho^*_\star(Z^*)<\infty$.  The assertion now follows from (3). 
\end{enumerate}
\end{proof}

\section{Local comonotone improvement}\label{sec:local}

\begin{definition}\label{def:locally comonotone}
Let $X\in L^\infty$ and $\pi\in\Pi$ be a finite measurable partition. 
A vector $\mbf Y\in\mathbb A_X$ is a \emph{locally comonotone allocation} of $X$ over $\pi$ if there are $(\mbf f^B)_{B\in\pi}\subset\mf C$ such that, for all $i\in[n]$,
\[Y_i=\sum_{B\in\pi} f_i^B(X)\ind_B.\]
\end{definition}

Given $\Q\ll\P$, in the sequel we denote by $\peq_\Q$ the \emph{$\Q$-convex order} on $L^\infty$: $X\peq_\Q Y$ holds if and only if, for all convex test functions $v\colon\R\to\R$, $\E_\Q[v(X)]\le\E_\Q[v(Y)]$.
\begin{lemma}\label{lem:1}
Suppose:
\begin{itemize}
\item[(i)] A vector $(\Q_1,\ldots,\Q_n)$ of probability measures satisfies Assumption~\ref{assum1}.
\item[(ii)] $\pi\in\Pi$ is a partition such that each $\frac{{\rm d}\Q_i}{{\rm d}\P}$ has a $\sigma(\pi)$-measurable version.
\item[(iii)]$\rho_i\colon L^\infty\to\R$ is a $\Q_i$-consistent risk measure, $i\in[n]$.
\end{itemize}
Let $X\in L^\infty$ and $\mbf X\in\mathbb A_X$ be arbitrary. Then there exists a locally comonotone allocation $\mbf Y\in \mathbb A_X$ over $\pi$ such that, for all $i\in[n]$,
\[Y_i\peq_{\Q_i}X_i\quad\text{and}\quad\rho_i(Y_i)\le \rho_i(X_i).\]
\end{lemma}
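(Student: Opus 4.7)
The plan is to exploit Assumption~\ref{assum1}---via Lemma~\ref{lem:conditional}---to reduce everything to the classical single-measure comonotone improvement applied blockwise. First, fix a partition $\pi\in\Pi$ as provided by Lemma~\ref{lem:conditional}(2), so that each density $\tfrac{d\Q_i}{d\P}$ has a $\sigma(\pi)$-measurable version. For any block $B\in\pi$, this density is $\P$-a.s.\ constant on $B$, whence a direct computation of conditional probabilities gives the crucial equality
\[
\Q_i^B=\P^B\qquad\text{for every }i\in[n].
\]
Thus, conditionally on a single block, the heterogeneity of beliefs evaporates and all agents share one reference measure.

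Next, for each $B\in\pi$, I would apply a classical comonotone improvement theorem (Landsberger--Meilijson~\cite{Landsberger}, Carlier et al.~\cite{Carlier}, or Filipovi\'c--Svindland~\cite{Svindland}) to the restriction $(X_1|_B,\dots,X_n|_B)$ of $\mbf X$ on the atomless probability space $(B,\CF|_B,\P^B)$. This yields $\mbf f^B\in\mf C(n)$ such that $f_i^B(X)\peq_{\P^B}X_i$ on $B$ for every $i\in[n]$. Gluing across the partition, set
\[
Y_i:=\sum_{B\in\pi}f_i^B(X)\ind_B,\quad i\in[n].
\]
Because $\sum_{i\in[n]}f_i^B=\tn{id}_\R$ on every block, one has $\sum_{i\in[n]}Y_i=X$, so $\mbf Y\in\mathbb A_X$ is a locally comonotone allocation of $X$ over $\pi$ by construction.

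For property (1), any convex test function $v\colon\R\to\R$, together with $\Q_i^B=\P^B$ and the blockwise convex-order improvement, yields
\[
\E_{\Q_i}[v(Y_i)]=\sum_{B\in\pi}\Q_i(B)\,\E_{\P^B}[v(f_i^B(X))]\le \sum_{B\in\pi}\Q_i(B)\,\E_{\P^B}[v(X_i)]=\E_{\Q_i}[v(X_i)],
\]
hence $Y_i\peq_{\Q_i}X_i$. Property (2) follows at once: $\Q_i$-convex order implies $\Q_i$-second-order stochastic dominance (see the footnote to Definition~\ref{def:consistent}), and so $\Q_i$-consistency of $\rho_i$ yields $\rho_i(Y_i)\le\rho_i(X_i)$.

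The conceptual heart of the argument---and the only nontrivial point---is the identification $\Q_i^B=\P^B$ on each block, which is exactly what Assumption~\ref{assum1} buys us. Once this reduction is in place, the rest is a routine assembly of classical ingredients, and no genuinely heterogeneous-beliefs analogue of comonotone improvement is required.
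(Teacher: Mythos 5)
Your proposal is correct and follows essentially the same route as the paper: blockwise comonotone improvement via Carlier et al.\ on each atomless block $(B,\P^B)$, gluing over $\pi$, the same weighted convex-order computation for (1), and consistency for (2); your explicit identification $\Q_i^B=\P^B$ is exactly what the paper uses implicitly through the piecewise-constant density (and one should only note it requires $\Q_i(B)>0$, though blocks with $\Q_i(B)=0$ drop out of the sum anyway).
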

\begin{proof}
For each $B\in\pi$ consider the nonatomic probability space $(B,\CG_B,\P^{B})$, where 
$\CG_B:=\{B\cap A\mid A\in\CF\}$.
As 
\[\sum_{i=1}^nX_i|_{B}=X|_{B},\]
there is a comonotone $n$-partition of the identity $\mbf f^B\in\mf C$ such that $f_i^B(X|_{B})\peq_{\P^{B}}X_i|_{B}$ holds for all $i\in[n]$;
cf.\ \cite[Theorem 3.1]{Carlier}. 
In particular, setting 
\[Y_i:=\sum_{B\in\Pi}f_i^B(X)\ind_{B},\quad i\in[n],\]
defines an allocation $\mbf Y\in\mathbb A_X$ which we claim to be as described in (1). In order to verify this, recall that for each $i\in[n]$, 
\[\tfrac{{\rm d}\Q_i}{{\rm d}\P}=\sum_{B\in\pi}\tfrac{\Q_i(B)}{\P(B)}\ind_{B}\] 
by assumption (i).
Let $v:\R\to\R$ be an arbitrary convex function satisfying $v(0)=0$ without loss of generality, and compute
\begin{align*}
\E_{\Q_i}[v(Y_i)]&=\E_{\Q_i}\Big[\sum_{B\in\pi}v\big(f^B_i(X)\big)\ind_{B}\Big]=\sum_{B\in\pi}\Q_i(B)\E_{\P^{B}}\left[v\big(f_i^B(X|_{B})\big)\right]\\
&\le\sum_{B\in\pi}\Q_i(B)\E_{\P^{B}}[v(X_i|_{B})]=\E_{\Q_i}[v(X_i)].
\end{align*}
The same allocation $\mbf Y$ also satisfies $\rho_i(Y_i)\le\rho_i(X_i)$ for all $i\in[n]$ because each consistent risk measure $\rho_i$ is monotone with respect to $\peq_{\Q_i}$. 
\end{proof}

The previous lemma is the first step in the adaptation of the comonotone improvement procedure to our setting of heterogeneous reference probability measures. Under its assumptions, it implies in particular that for all vectors $\mbf X\in\prod_{i=1}^n\CA_{\rho_i}$ consisting of acceptable net losses, we find a locally comonotone allocation $\mbf Y$ of $\sum_{i=1}^n X_i$ over $\pi$ such that $\rho_i(Y_i)\le \rho_i(X_i)\le 0$, i.e., $\mbf Y\in\prod_{i=1}^n\CA_{\rho_i}$ as well. 

The next proposition provides the bounds necessary for the second step of comonotone improvement.

\begin{proposition}\label{prop:convergence}
Suppose:
\begin{itemize}
\item[(i)] A vector $(\Q_1,\ldots,\Q_n)$ of probability measures $\Q_i\approx\P$ satisfies Assumption~\ref{assum1}.
\item[(ii)] $\pi\in\Pi$ is a measurable partition of $\Omega$ as in \eqref{used once}.
\item[(iii)]$\rho_i\colon L^\infty\to\R$ is a $\Q_i$-consistent risk measure, $i\in[n]$.
\item[(iv)]The risk measures $\rho_1,\ldots,\rho_{n-1}$ are admissible and, for all $i\in[n-1]$ there is $Z_i\in\mbf C(\rho_i)$ such that 
\[\max_{k\in[n]}\rho_k^*(Z_i)<\infty.\]
\end{itemize}
Let $(Y_k)_{k\in\N}\subset\sum_{i=1}^n\CA_{\rho_i}$ be a bounded sequence and suppose that, for all $k\in\N$, $\mbf Y^k\in\prod_{i=1}^n\CA_{\rho_i}$ is a locally comonotone allocation of $Y_k$ over $\pi$. Then the sequence $(\mbf Y^k)_{k\in\N}$ is bounded as well. 
\end{proposition}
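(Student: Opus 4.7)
The plan is to decompose each $Y_i^k$ into a ``universally bounded'' part and a piecewise constant part, and then to argue that the piecewise constant part cannot blow up. For each $k$ and $B\in\pi$, write the comonotone $n$-partition $\mbf f^{B,k}\in\mf C(n)$ in the form $f_i^{B,k}(y)=c_i^{B,k}+g_i^{B,k}(y)$ with $c_i^{B,k}:=f_i^{B,k}(0)$ and $g_i^{B,k}(y):=f_i^{B,k}(y)-f_i^{B,k}(0)$. Since $\sum_{i=1}^n f_i^{B,k}=\tn{id}_\R$ and each $f_i^{B,k}$ is nondecreasing, each $g_i^{B,k}$ is $1$-Lipschitz with $g_i^{B,k}(0)=0$, so $|g_i^{B,k}(y)|\le|y|$, while $\sum_{i=1}^n c_i^{B,k}=0$. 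Define $C_i^k:=\sum_{B\in\pi}c_i^{B,k}\ind_B$ and $R_i^k:=\sum_{B\in\pi}g_i^{B,k}(Y_k)\ind_B$, so that $Y_i^k=C_i^k+R_i^k$. With $M:=\sup_k\|Y_k\|_\infty<\infty$, the bound $\|R_i^k\|_\infty\le M$ holds for all $k$ and $i$. Hence boundedness of $(\mbf Y^k)_{k\in\N}$ reduces to boundedness of $s_k:=\max_{i\in[n],B\in\pi}|c_i^{B,k}|$.

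\textbf{Extraction of a limit under contradiction.} Suppose toward contradiction that $s_k\to\infty$ along a subsequence, which we relabel. The normalised constants $\tilde C_i^k:=C_i^k/s_k$ lie in the finite-dimensional space of $\sigma(\pi)$-measurable simple functions and are bounded by $1$ on each atom of $\pi$; by compactness, a further subsequence yields $\tilde C_i^k\to\tilde C_i$ in $L^\infty$ for each $i\in[n]$. By $\sum_i c_i^{B,k}=0$ we have $\sum_i\tilde C_i=0$. Choosing for each $k$ an index $(i_k,B_k)$ attaining $s_k=|c_{i_k}^{B_k,k}|$, a pigeonhole argument on the finite index set $[n]\times\pi$ allows us to assume this pair is a fixed $(i_0,B_0)$, whence $|\tilde C_{i_0}(B_0)|=1$ and in particular not all $\tilde C_i$ vanish. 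Next, from $Y_i^k/s_k=\tilde C_i^k+R_i^k/s_k\to \tilde C_i$ in $L^\infty$ together with $Y_i^k\in\CA_{\rho_i}$ and $1/s_k\to 0$, the defining equation \eqref{eq:asymptotic} of the asymptotic cone yields $\tilde C_i\in\CA_{\rho_i}^\infty$ for every $i\in[n]$.

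\textbf{Collapse via compatibility.} Fix $i\in[n-1]$ and the compatible element $Z_i\in\mbf C(\rho_i)$ from assumption (iv), which additionally satisfies $Z_i\in\dom(\rho_k^*)$ for every $k\in[n]$. By Lemma \ref{lem:direction}(1), $\E[Z_i\tilde C_k]\le 0$ for every $k\in[n]$. Summing and using $\sum_k\tilde C_k=0$,
\[
0=\E\bigl[Z_i\textstyle\sum_{k=1}^n\tilde C_k\bigr]=\sum_{k=1}^n\E[Z_i\tilde C_k]\le 0,
\]
so each individual summand vanishes. In particular $\E[Z_i\tilde C_i]=0$ with $\tilde C_i\in\CA_{\rho_i}^\infty$, hence $\tilde C_i=0$ by compatibility (Definition \ref{def:compatible}). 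Running over $i\in[n-1]$ gives $\tilde C_1=\dots=\tilde C_{n-1}=0$, and $\sum_i\tilde C_i=0$ then forces $\tilde C_n=0$. This contradicts $|\tilde C_{i_0}(B_0)|=1$, completing the proof.

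\textbf{Expected obstacle.} The main subtlety is handling the ``last'' agent $n$, for which no compatibility hypothesis is available. This is resolved precisely by the strengthened assumption (iv) that the chosen $Z_i$ lies in $\dom(\rho_k^*)$ for \emph{all} $k$, including $k=n$: this upgrades the individual one-sided estimate $\E[Z_i\tilde C_i]\le 0$ to the collective identity $\sum_k\E[Z_i\tilde C_k]=0$, forcing equality in each slot and thereby transferring the collapse-to-zero conclusion from $\tilde C_i$ to every $\tilde C_k$ via the constraint $\sum_k\tilde C_k=0$.
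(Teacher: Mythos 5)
Your proof is correct and follows essentially the same route as the paper's: split each coordinate into a uniformly bounded $1$-Lipschitz part plus a $\sigma(\pi)$-measurable piecewise-constant part, normalise the blow-up, use compactness of the finite-dimensional space $\R^{\pi}$ to extract a nonzero limit lying in the relevant asymptotic cone, and annihilate it via Lemma~\ref{lem:direction}(1) combined with compatibility. The only difference is organisational: you normalise all coordinates jointly by $s_k$ and close with the zero-sum identity $\sum_{k\in[n]}\E[Z_i\widetilde{C}_k]=0$, each term being nonpositive because every $\widetilde{C}_k\in\CA_{\rho_k}^\infty$ and $Z_i\in\dom(\rho_k^*)$, which also disposes of agent $n$ inside the same contradiction; the paper instead argues coordinate-by-coordinate, obtaining the reverse inequality $\E[Z_1V]\ge 0$ quantitatively from $\E[Z_1Y_j^k]\le\rho_j^*(Z_1)$ divided by $\beta_k\to\infty$ and bounding the last coordinate separately at the end---the way assumption (iv) enters is identical in substance.
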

\begin{proof}
Let $\pi$ be the measurable partition of $\Omega$ from (ii). For $k\in\N$, there are $(\mbf f^{B,k})_{B\in\pi}\in\mf C$ such that, for all $i\in[n]$, 
\begin{equation}\label{locally comonotone}Y_i^k:=\sum_{B\in\pi}f^{B,k}_i(Y_k)\ind_{B}.\end{equation}
Let us first discuss boundedness of the sequence $(Y_1^k)_{k\in\N}$ in detail. Towards a contradiction, suppose $(Y_1^k)_{k\in\N}$ is unbounded in norm. 
As 
\begin{equation}\label{eq:limit1}Y^k_1=\sum_{B\in\pi}{\widetilde f}^{B,k}_1(Y_k)\ind_{B}+\sum_{B\in\pi}f_1^{B,k}(0)\ind_{B}\end{equation}
and the first summand is uniformly bounded in $k$ by Lipschitz continuity of $f_1^{B,k}$ and boundedness of the sequence $(Y_k)_{k\in\N}$, we obtain that 
$$\beta_k:=\max_{B\in\pi}|f_1^{B,k}(0)|=\Big\|\sum_{B\in\pi}f_1^{B,k}(0)\ind_{B}\Big\|_\infty$$
satisfies 
$\lim_{k\to\infty}\frac{\|Y_1^k\|_\infty}{\beta_k}=1.$
Moreover, up to passing to a subsequence, we can assume that for a suitable $\mbf u\in\R^\pi$ with $\max_{B\in\pi}|u_B|=1$,  
\begin{equation}\label{eq:contradiction}V:=\lim_{k\to\infty}\sum_{B\in\pi}\frac{f_1^{B,k}(0)}{\beta_k}\ind_{B}=\sum_{B\in\pi}u_B\ind_B\neq 0.\end{equation}
By definition of the asymptotic cone, $V\in\CA_{\rho_1}^\infty$. 
By Lemma~\ref{lem:direction}(1), for $Z_1\in\mbf C(\rho_1)$ chosen as in assumption (iv),
\begin{equation}\label{eq:leq}\E_\P[Z_1V]\le 0.\end{equation}
As $\rho_i^*(Z_1)<\infty$ for $i=2,...,n$ by (iv), we further obtain 
\begin{align*}
\E_\P[Z_1V]&=-\E_\P[Z_1(-V)]=-\lim_{k\to\infty}\sum_{i=2}^n\E_\P\left[Z_1\tfrac{Y_i^k}{\beta_k}\right]\ge-\lim_{k\to\infty}\tfrac 1{\beta_k}\sum_{i=2}^n\rho_i^*(Z_1)=0.
\end{align*}
Combining the latter estimate with \eqref{eq:leq}, $\E_\P[Z_1V]=0$ follows. 
Using that $V\in\CA_{\rho_1}^\infty$ and that $Z_1\in\mbf C(\rho_1)$, $V=0$ is a consequence in direct contradiction to \eqref{eq:contradiction}. The assumption that $(Y_1^k)_{k\in\N}$ is unbounded has to be absurd.

Arguing analogously for the coordinate sequences $(Y_i^k)_{k\in\N}$, $i=2,\ldots,n-1$, one infers their norm boundedness. 
At last,
\[\sup_{k\in\N}\|Y_n^k\|_\infty\le \sup_{k\in\N}\|Y_k\|_\infty+\sum_{i=1}^{n-1}\sup_{k\in\N}\|Y_i^k\|_\infty<\infty.\] 
\end{proof}

The most important consequence of Proposition~\ref{prop:convergence} in the sequel is the following corollary establishing closedness of the Minkowski sum of individual acceptance sets. 

\begin{corollary}\label{cor:convergence}
In the situation of Proposition~\ref{prop:convergence}, the Minkowski sum $\sum_{i=1}^n\CA_{\rho_i}$ is closed. 
\end{corollary}
\begin{proof}
Abbreviate $\CA_+:=\sum_{i=1}^n\CA_{\rho_i}$ and let $(Y_k)_{k\in\N}\subset\CA_+$ be a sequence that converges to $Y\in\Linfty$ as $k\to\infty$.
We need to prove that $Y\in\CA_+$.
To this effect, for all $k\in\N$ let $\mbf Y^k$ be a locally comonotone allocation as in \eqref{locally comonotone} such that, for all $i\in[n]$, $Y_i^k\in\CA_{\rho_i}$. 
By Proposition~\ref{prop:convergence},
$\sup_{k\in\N}\max_{i\in[n]}\|Y_i^k\|_\infty<\infty.$
By \eqref{eq:limit1}
\[\kappa:=\sup\big\{|f^{B,k}_i(0)|\mid k\in\N,\,B\in\pi,\,i\in[n]\big\}\le\sup\{\|Y^k\|_\infty+\max_{i\in[n]}\|Y_i^k\|_\infty\mid k\in\N\}<\infty.\]
The set $\mf C_\kappa:=\{\mbf f\in\mf C\mid \max_{i\in[n]}|f_i(0)|\le \kappa\}$ is sequentially compact in the topology of pointwise convergence; cf.\ \cite[Lemma B.1]{Svindland}.
Hence, there are $(\mbf g^B)_{B\in\pi}\subset\mf C_\kappa$ such that, up to switching to a subsequence $|\pi|$ times, $\lim_{k\to\infty}\mbf f^{B,k}(x)=\mbf g^B(x)$ for all $B\in\pi$ and all $x\in\R$.
For each $i\in[n]$, we further observe that
\begin{align*}\big\|\sum_{B\in\pi}(f_i^{B,k}(Y^k)-g^B_i(Y))\ind_{B}\big\|_\infty&\le\sum_{B\in\pi}\big\|\big(f_i^{B,k}(Y^k)-g^B_i(Y)\big)\ind_{B}\big\|_\infty\\
&\le\sum_{B\in\pi}\big\|(\widetilde f_i^{B,k}-\widetilde g^B_i)(Y^k)\big\|_\infty+\|Y^k-Y\|_\infty+|f^{B,k}_i(0)-g^B_i(0)|.
\end{align*}
As $\widetilde{\mbf f}^{B,k}$ converges to $\widetilde{\mbf g}^B$ uniformly on the compact interval $[-\sup_{k\in\N}\|Y_k\|_\infty,\sup_{k\in\N}\|Y_k\|_\infty]$, we infer that 
$\lim_{k\to\infty}\big\|\sum_{B\in\pi}(f_i^{B,k}(Y_k)-g^B_i(Y))\ind_B\big\|_\infty=0$.
As $\CA_{\rho_i}$ is closed, 
$\sum_{B\in\pi}g^B_i(Y)\ind_{B}\in\CA_{\rho_i}$ must hold, $i\in[n]$. 
It remains to observe that the latter defines a vector in $\mathbb A_Y$ and that therefore $Y\in\CA_+$. 
\end{proof}

\begin{remark}\label{Hans}
Proposition~\ref{prop:convergence} and Corollary~\ref{cor:convergence} are the key to verifying the existence of optimal allocations claimed in Theorem~\ref{thm:risksh_mon}. 
The role that Proposition~\ref{prop:convergence}---and thereby admissibility of individual risk measures---plays for Corollary~\ref{cor:convergence} and Theorem~\ref{thm:risksh_mon} is best understood as adaptation of the spirit of Dieudonn\'e's \cite{Dieudonne} famous theorem to the present nonconvex situation. 
The proof of Proposition~\ref{prop:convergence} justifies the reliance on Assumption~\ref{assum1} of our approach based on comonotone improvement.
\textit{A priori}, the comonotone functions $(\mbf f^B)_{B\in\pi}$ which describe the allocation on events $B\in\pi$ have no apparent relation to each other. 
This problem is circumvented by the use that equation~\eqref{eq:contradiction} makes of the compactness of the finite-dimensional unit ball in $\R^\pi$. 
We do not see an immediate generalisation to infinite dimensions, i.e., completely arbitrary heterogeneous reference measures.
\end{remark}

\section{Proofs from Sections~\ref{sec:consistent} and~\ref{sec:heterogeneous}}\label{proofs of sec 2}

\begin{proof}[Proof of Lemma~\ref{lem:is equiv}]\
\begin{enumerate}[(1)]
    \item Assume $\Q\not\approx\P$ and fix $Y\in\CA_\rho$ as well as a $\Q$-null set $N\in\CF$ which satisfies $\P(N)>0$. 
For all $r\in\R$, $Y+r\ind_N\sim_\Q Y$, whence $Y+\tn{span}\{\ind_N\}\subset\CA_\rho$ follows.
If $Z^*\in L^1$ satisfies $\rho^*(Z^*)<\infty$, \[\sup_{r\in\R}\E_\P[Z^*(Y+r\ind_N)]\le \rho^*(Z^*)\]
must hold. This entails $\E_\P[Z^*\ind_{N}]=0$. However, we also observe that $\ind_N=\lim_{k\to\infty}\tfrac 1 k(Y+k\ind_N)$
lies in the asymptotic cone $\CA_\rho^\infty$. Thus, no $Z^*\in \dom(\rho^*)\cap L^1$ can satisfy requirements (a)--(b) in Definition~\ref{def:compatible}.
\item Let $Z^*\in\mbf C(\rho)$ and note that $-\ind_{\{Z^*=0\}}\in\CA_\rho^\infty$. As $\E_\P[Z^*(-\ind_{\{Z^*=0\}})]=0$, $\P(Z^*=0)=0$ follows. 
\item Let $Q,Z$, and $\lambda$ be as described and abbreviate 
$R:=\lambda Q+(1-\lambda)Z.$
For property (a) in Definition~\ref{def:compatible}, convexity of $\dom(\rho^*)$ implies $R\in\dom(\rho^*)$.
As for property (b), suppose $U\in\CA_\rho^\infty$ satisfies $\E_\P\left[RU\right]=0$. By Lemma~\ref{lem:direction}(1), both $\E_\P[QU]\le 0$ and $\E_\P[ZU]\le 0$. This forces $\E_\P[QU]=\E_\P[ZU]=0$. As $Q\in\mbf C(\rho)$, $U=0$ follows.
\end{enumerate}
\end{proof}

For reasons that will become clear momentarily, we shall first give the proof of Proposition~\ref{uncountably many}.
\begin{proof}[Proof of Proposition~\ref{uncountably many}]
Clearly, if $\rho$ is admissible, the identity $\rho=\E_\Q[\cdot]$ is impossible. 
Conversely, star-shapeness of $\rho$ implies the following substantially simpler characterisation of the asymptotic cone:
\[\CA_\rho^\infty=\{U\in L^\infty\mid \forall\,t\ge 0:~tU\in\CA_\rho\}.\]  
By \cite[Theorem 5.7]{Collapse},
$\rho\neq\E_\Q[\cdot]$ if and only if, for all $U\in \CA_\rho^\infty$, $\E_\Q[U]=0$ only holds if $U=0$. The latter is equivalent to $\frac{{\rm d}\Q}{{\rm d}\P}\in\mbf C(\rho)$ and implies admissibility of $\rho$. 
\end{proof}

\begin{proof}[Proof of Proposition~\ref{compatible nonempty}]
(1) implies (4): $\rho$ is admissible if and only if the star-shaped hull $\rho_\star$ introduced in \eqref{starshaped hull} is admissible.
By \cite[Theorem 5.7]{Collapse} and \eqref{eq:id1}, the latter is the case if and only if $\dom(\rho^*)\cap L^1=\dom(\rho_\star^*)\cap L^1$ contains at least two elements. 

(3) implies (2): If (3) holds, then $\rho_\star=\E_\Q[\cdot]$ is not possible. 
By the argument from the proof of Proposition~\ref{uncountably many} and Lemma~\ref{lem:star}, $\frac{{\rm d}\Q}{{\rm d}\P}\in\mbf C(\rho_\star)=\mbf C(\rho)$. 

The implications (4) implies (3) and (2) implies (1) are trivial.

(2) implies (5): Suppose that $\rho\le\E_\Q[\cdot]+\beta$ for some $\beta>0$. Then for all $U\in L^\infty$ with $\E_\Q[U]\le 0$, $nU-\beta\in\CA_\rho$. Thus, $U=\lim_{n\to\infty}\tfrac 1 n(nU-\beta)\in\CA_\rho^\infty$. This observation clearly precludes $\tfrac{{\rm d}\Q}{{\rm d}\P}\in\mbf C(\rho)$. 
\end{proof}

We now turn to the results from Section~\ref{sec:heterogeneous}. 

\begin{lemma}\label{lem:partition}
A function $\ph\colon L^\infty\to\R$ is dilatation monotone above a finite $\sigma$-algebra $\CH$ if and only if $\ph$ is dilatation monotone above $\sigma(\pi)$ for some $\pi\in\Pi$.
\end{lemma}
\begin{proof}
We only have to prove necessity. Let $\CH$ be the finite sub-$\sigma$-algebra in question. The set $\Pi(\CH):=\{\pi\in\Pi\mid \pi\subset\CH\}$ is nonempty as $\{\Omega\}\in\Pi(\CH)$. Select $\pi^*\in \Pi(\CH)$ with maximal cardinality and prove that every $A\in\CH$ has to satisfy 
$\ind_A=\ind_E$ in $L^\infty$ for some $E\in\sigma(\pi^*)$.
Hence, $\sigma(\pi^*)\subset\CH\subset\sigma(\pi^*\cup\CN)$, where $\CN$ denotes the collection of all $\P$-null sets. As dilatation monotonicity of $\ph$ above $\sigma(\pi^*)$ is equivalent to dilatation monotonicity above $\sigma(\pi^*\cup\CN)$, the proof is concluded.
\end{proof}

\begin{proof}[Proof of Theorem~\ref{thm:Pbased}]
(1) implies (2): If probability measures $(\Q_1,\ldots,\Q_n)$ satisfy Assumption~\ref{assum1}, we may fix a vector $\mbf Q$ of versions $Q_i$ of $\tfrac{{\rm d}\Q_i}{{\rm d}\P}$ which are simple functions. The set 
$$\Sigma:=\{\mbf q\in\R^n\mid\P(\mbf Q=\mbf q)>0\}$$
is finite.
The event $\bigcap_{\mbf q\in\Sigma}\{\mbf Q=\mbf q\}^c$ is a null set which we can assume to be empty by suitably manipulating the choice of densities. Hence, setting $\pi:=\{\{\mbf Q=\mbf q\}\mid \mbf q\in\Sigma\}\in\Pi$, each density has a $\sigma(\pi)$-measurable version. Set $\mathcal H:=\sigma(\pi)$, let $\CG\supset\CH$ be another sub-$\sigma$-algebra of $\CF$, and note that for all $i\in[n]$ and all $X\in L^\infty$ (by the Bayes rule for conditional expectations) 
\[
\tfrac{{\rm d}\Q_i}{{\rm d}\P}\E_{\Q_i}[X|\CG]=\E_\P[\tfrac{{\rm d}\Q_i}{{\rm d}\P}X|\CG]=\tfrac{{\rm d}\Q_i}{{\rm d}\P}\E_\P[X|\CG].
\]
In view of $\tfrac{{\rm d}\Q_i}{{\rm d}\P}>0$ by equivalence, $\E_{\Q_i}[X|\CG]=\E_\P[X|\CG]$. As moreover each $\Q_i$-consistent risk measure is $\Q_i$-dilatation monotone, we have 
\[\rho_i\left(\E_\P[X|\CG]\right)=\rho_i\left(\E_{\Q_i}[X|\CG]\right)\le\rho_i(X).\]
Each $\rho_i$ is dilatation monotone above $\CH$. 

(2) implies (1): Suppose that each $\rho_i$ is dilatation monotone above a finite sub-$\sigma$-algebra $\CH\subset\CF$.
By Lemma~\ref{lem:partition}, we can assume without loss of generality that $\CH=\sigma(\pi)$ for some $\pi\in\Pi$. The further proof proceeds in several steps.

\textsf{Step 1:} Fix $i\in[n]$, $E\in\pi$, $Y\in L^\infty$, and consider the continuous function $\ph\colon L^\infty_{\P^E}\to\R$ defined by 
\[\ph(X):=\rho_i\left(Y\ind_{E^c}+X\ind_{E}\right).\footnote{~$L^\infty_{\P^E}$ is isometrically isomorphic to $\{X\ind_E\mid X\in L^\infty\}$.}\]
Then $\ph$ is $\P^E$-law invariant. 
By \cite[Theorem 1.1]{Cherny}, it suffices to show dilatation monotonicity. Let $\CG\subset\CF$ be an arbitrary sub-$\sigma$-algebra and consider the sub-$\sigma$-algebra 
$\CG_E:=\{(A\cap E)\cup B\mid A\in\CG,\,B\in\CF,\,B\subset E^c\}\supset\CH$. 
Note that, for all $A\in\CG$ and all $B\in\CF$ with $B\subset E^c$, 
\begin{align*}
\E_\P[(X\ind_{E}+Y\ind_{E^c})\ind_{(A\cap E)\cup B}]&=\E_\P[X\ind_{A\cap E}]+\E_\P[Y\ind_B]=\P(E)\E_{\P^{E}}[X\ind_A]+\E_\P[Y\ind_B]\\
&=\P({E})\E_{\P^{E}}\left[\E_{\P^{E}}[X|\CG]\ind_A\right]+\E_\P[Y\ind_B]\\
&=\E_\P\left[\E_{\P^{E}}[X|\CG]\ind_{A\cap {E}}+Y\ind_B\right]=\E_\P\left[(\E_{\P^{E}}[X|\CG]\ind_{E}+Y\ind_{E^c})\ind_{(A\cap E)\cup B}\right].
\end{align*}
As $\E_{\P^{E}}[X|\CG]\ind_{E}+Y\ind_{E^c}$ can also be verified to be $\CG_E$-measurable, we obtain 
\[\E_\P[X\ind_{E}+Y\ind_{E^c}|\CG_E]=\E_{\P^{E}}[X|\CG]\ind_{E}+Y\ind_{E^c}.\]
Hence, by dilatation monotonicity of $\rho_i$, 
\[\ph(\E_{\P^{E}}[X|\CG])=\rho_i\left(\E_\P[X\ind_{E}+Y\ind_{E^c}|\CG_E]\right)\le \rho_i(X\ind_{E}+Y\ind_{E^c})=\ph(X).\]
\textsf{Step 2:} If $\Q_i^E=\P^E$ for all $E\in\pi$, then \eqref{used once} holds automatically. 

\textsf{Step 3:} Assume otherwise that
there is $i\in[n]$ and $E\in\pi$ such that $\P^E\neq \Q_i^E$. Let $\es_1$ be the Expected Shortfall at level 1 computed under $\P$; cf.\ \eqref{eq:ES} below. 
Then
$\rho_i=\es_1$
must hold. 
As both $\rho_i$ and $\es_1$ are $\Norm_\infty$-continuous, it suffices to verify the identity for an arbitrary simple random variable. $X\le \es_1(X)$ holds for all $X\in L^\infty$. By monotonicity and cash-additivity, $\rho_i(X)\le\rho_i\left(\es_1(X)\right)=\es_1(X)$, and this estimate is an equality if $X$ is constant $\P$-a.s.\ Assume now that we find a simple random variable $X$ such that $\rho_i(X)<\es_1(X)$. 
Using the Dubins-Spanier Theorem~\cite[Theorem 13.34]{Ali} we can switch to $X'\sim_{\Q_i}X$ if necessary and assume that $X$ attains all its finitely many possible values on $E$ with positive probability. 

The functional $L^\infty_{\P^E}\ni V\mapsto \rho_i(X\ind_{E^c}+V\ind_E)$ is law invariant both with respect to $\P^E$ and to $\Q^E_i$; has the Fatou property by \cite[Theorem 3.5]{Consistent};
and is monotone.
Note that $-\es_1(-X)<\es_1(X)$, and observe that $Y\in L^\infty$ defined by
\[Y:=\begin{cases}X&\text{on }E^c,\\
-\es_1(-X)&\text{on }\{X<\es_1(X)\}\cap E,\\
\es_1(X)&\text{on }\{X=\es_1(X)\}\cap E,\end{cases}\]
satisfies $Y\le X$.
Hence, $\rho_i(Y)<\es_1(X)$. 
Define 
\[p^*:=\sup\{\P^E(A)\mid A\in\CF,\,A\subset E,\,\rho_i(X\ind_{E^c}+\es_1(X)\ind_A-\es_1(-X)\ind_{E\setminus A}=\rho_i(Y)\}.\]
We have $p^*\ge \P^E(X<\es_1(X))>0$. Moreover, the Fatou property implies for all $A\in\CF$ with $\P^E(A)=p^*$ that $\rho_i(X\ind_{E^c}+\es_1(X)\ind_A-\es_1(-X)\ind_{E\setminus A})=\rho_i(Y)$, whence also $p^*<1$. 

By, e.g., \cite[Lemma A.2]{Collapse}, we may choose $A,B\in\CF$ such that $A,B\subset E$ and 
\[\Q_i^E(A)<\P^E(A)=p^*=\P^E(B)<\Q_i^E(B).\]
Use the convex range property of $\Q_i^E$ to select $C\in\CF$ such that $A\subset C\subset E$ and $\Q_i^E(B)=\Q_i^E(C)$. In particular, $\Q_i^E(C\setminus A)>0$ implies $\P^E(C\setminus A)>0$, whence $\P^E(C)>p^*$ follows. By definition of $p^*$ and monotonicity of $\rho_i$, 
\begin{equation}\label{first contr}\rho_i\big(X\ind_{E^c}+\es_1(X)\ind_C-\es_1(-X)\ind_{E\setminus C}\big)>\rho_i(Y).\end{equation}
As $X\ind_{E^c}+\es_1(X)\ind_C-\es_1(-X)\ind_{E\setminus C}\sim_{\Q_i}X\ind_{E^c}+\es_1(X)\ind_B-\es_1(-X)\ind_{E\setminus B}$ though and $\P^E(B)=p^*$, 
\begin{equation}\label{second contr}\rho_i\big(X\ind_{E^c}-\es_1(-X)\ind_C+\es_1(X)\ind_{E\setminus C}\big)=\rho_i(Y).\end{equation}
\eqref{first contr} and \eqref{second contr} together yield a contradiction. The assumption $\rho_i(X)<\es_1(X)$ for some simple $X$ must be absurd. 

\textsf{Step 4:} If $\rho_i=\es_1$, it is naturally a $\P$-consistent risk measure. We can thus replace $\Q_i$ in the vector $(\Q_1,\ldots,\Q_n)$ of reference measures by $\P$. 
The potentially modified vector $(\Q_1,\ldots,\Q_n)$ satisfies \eqref{used once}, the alternative shape \eqref{used once} of Assumption~\ref{assum1}.
\end{proof}

\begin{proof}[Proof of Proposition~\ref{prop:sufficient}]
If $\pi\in\Pi$ is such that every density $\tfrac{{\rm d}\Q_i}{{\rm d}\Q_1}$ has a $\sigma(\pi)$-measurable version, then 
\[\E_{\Q_i}[X|\sigma(\pi)]=\sum_{B\in\pi}\frac{\E_{\Q_i}[X\ind_B]}{\Q_i(B)}\ind_B=\sum_{B\in\pi}\E_{\Q_1^B}[X]\ind_B.\]
The latter is independent of $i$. 

Conversely, we can argue as  in Lemma~\ref{lem:partition} 
to find $\pi\in\Pi$ such that $\sigma(\pi)$ is sufficient for $\{\Q_1,\ldots,\Q_n\}$. In that case, for $A\in\CF$,
\[\Q_i(A)=\E_{\Q_i}[\E_{\Q_i}[\ind_A|\sigma(\pi)]]=\E_{\Q_i}\left[\E_{\Q_1}[\ind_A|\sigma(\pi)]\right]=\sum_{B\in\pi}\Q_i(B)\Q_1^B(A).\]
As \eqref{used once} holds, also Assumption~\ref{assum1} must hold.
\end{proof}

\section{Proofs of Theorem~\ref{thm:risksh_mon}, Corollary~\ref{cor:problem2}, and Corollary~\ref{cor:Pbased}}\label{proofs of sec 3}

\begin{proof}[Proof of Theorem~\ref{thm:risksh_mon}]
First of all, we prove that $\rho$ does not attain the value $-\infty$.
To this end, let $Z_1\in\mbf C(\rho_1)$ be as described in assumption (iii), let $X\in\Linfty$, and fix $\mbf Y\in\mathbb A_X$.
By definition of the convex conjugate,
\[\sum_{i=1}^n\rho_i(Y_i)\ge\sum_{i=1}^n\E_\P[Z_1Y_i]-\rho_i^*(Z_1)=\E_\P[Z_1X]-\sum_{i=1}^n\rho_i^*(Z_1).\]
Taking the infimum over $\mbf Y\in\mathbb A_X$ on the left-hand side proves
\[\rho(X)\ge\E_\P[Z_1X]-\sum_{i=1}^n\rho_i^*(Z_1)>-\infty.\]
Also, the Minkowski sum 
$\CA_+:=\sum_{i=1}^n\CA_{\rho_i}$ is closed by Corollary~\ref{cor:convergence}.
The remainder of the proof is standard, but shall be included for the sake of completeness.
In its next step, we verify $\CA_+=\CA_\rho$. The inclusion of the left-hand set in the right-hand set is immediate. Conversely, assume without loss of generality $\rho(X)=0$ and consider a sequence $(\mbf Y^k)_{k\in\N}\subset\mathbb A_X$ such that 
$\lim_{k\to\infty}\sum_{i=1}^n\rho_i(Y_i^k)=0$.
Let $U^k_i:=Y^k_i-\rho_i(Y^k_i)\in\CA_{\rho_i}$, $i\in[n]$. As $\CA_+$ is closed, $X=X-\lim_{k\to\infty}\sum_{i=1}^n\rho_i(Y_i^k)=\lim_{k\to\infty}\sum_{i=1}^nU^k_i\in\CA_+$. 
Now, let $X\in L^\infty$. As $\rho\big(X-\rho(X)\big)=0$, $X-\rho(X)\in\CA_+$ and we can find $\mbf Y\in\prod_{i=1}^n\CA_{\rho_i}$ such that $X-\rho(X)=\sum_{i=1}^nY_i$. Defining an allocation $\mbf X\in\mathbb A_X$ by $X_i:=Y_i+\tfrac 1 n\rho(X)$, $i\in[n]$, we obtain
\[\rho(X)\le\sum_{i=1}^n\rho_i(X_i)=\sum_{i=1}^n\rho_i\big(Y_i+\tfrac 1 n\rho(X)\big)=\sum_{i=1}^n\rho_i(Y_i)+\rho(X)\leq \rho(X).\] 
Hence, $\mbf X$ is an optimal allocation of $X$.
\end{proof}

\begin{proof}[Proof of Corollary~\ref{cor:problem2}]
Let $(\mbf f^{B,k})_{B\in\pi,k\in\N}\subset\mf C$ such that, for all $k\in\N$,
\[r:=(\Box_{i=1}^n\rho_i)(X)=\sum_{i=1}^n\rho_i\Big(\sum_{B\in\pi}f^{B,k}_i(X)\ind_B\Big),\]
and such
$\Xi\big((\mbf f^{B,k})_{B\in\pi}\big)$ converges to the optimal value of \eqref{problem2} as $k\to\infty$.
Define $\mbf g^{B,k}\colon\R\to\R^n$ by \[g_i^{B,k}(x)=f_i^{B,k}(x+r)-\tfrac r n,\quad x\in\R,\]
and note that $(\mbf g^{B,k})_{B\in\pi,k\in\N}\subset\mf C$ as well as 
\[\sum_{i=1}^n\rho_i\Big(\sum_{B\in\pi}g_i^{B,k}(X-r)\Big)=0,\quad i\in[n].\]
As in the proof of Corollary~\ref{cor:convergence}, 
$\kappa:=\sup\big\{|g^{B,k}_i(0)|\mid k\in\N,\,B\in\pi,\,i\in[n]\big\}<\infty$.
Notice that this implies that 
\[\sup\big\{|f^{B,k}_i(0)|\mid \mid k\in\N,\,B\in\pi,\,i\in[n]\big\}\le \kappa+2|r|<\infty.\]
Using the compactness argument from the proof of Corollary~\ref{cor:convergence}, we find a solution $(\mbf f^B)_{B\in\pi}$ to \eqref{problem2}. 
\end{proof}

\begin{proof}[Proof of Corollary~\ref{cor:Pbased}]
Let $X,Y\in L^\infty$ and suppose that $X\sim_{\P^B}Y$ holds for all $B\in\pi$. Moreover, let $(\mbf f^B)_{B\in\pi}\subset\mf C$ and denote the resulting locally comonotone partitions of $X$ and $Y$ by $\mbf X$ and $\mbf Y$, respectively. We compute for any $i\in[n]$ and $x\in\R$: 
\begin{align*}\Q_i(X_i\le x)&=\sum_{B\in\pi}\Q_i(B)\P^B\left(f^B_i(X)\le x\right)=\sum_{B\in\pi}\Q_i(B)\P^B\left(f^B_i(Y)\le x\right)=\Q_i(Y_i\le x).
\end{align*}
That is, $X_i\sim_{\Q_i}Y_i$ holds for all $i\in[n]$.
Given the $\Q_i$-law-invariance of each $\rho_i$ and the proof of Theorem~\ref{thm:risksh_mon},
\begin{align*}\rho(X)&=\min_{(\mbf f^B)_{B\in\pi}\subset\mf C}\sum_{i=1}^n\rho_i\Big(\sum_{\pi\in B}f^B_i(X)\ind_B\Big)=\min_{(\mbf f^B)_{B\in\pi}\subset\mf C}\sum_{i=1}^n\rho_i\Big(\sum_{\pi\in B}f^B_i(Y)\ind_B\Big)=\rho(Y).
\end{align*}
\end{proof}

\section{Proof of Theorem~\ref{thm:application2}}\label{last appendix}

\begin{lemma}\label{lem:sumasymptotic}
In the situation of Theorem~\ref{thm:application2} set 
$\rho:=\Box_{i=1}^n\rho_i$ and $\tau_i:=(\rho_i)_\star$, $i\in[n]$.
Then 
\begin{equation}\label{outcome}\CA_\rho^\infty\subset\sum_{i=1}^n\CA_{\tau_i}^\infty=\sum_{i=1}^n\CA_{\rho_i}^\infty.\end{equation}
\end{lemma}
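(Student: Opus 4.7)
The plan is to first note that the equality $\sum_i \CA_{\tau_i}^\infty = \sum_i \CA_{\rho_i}^\infty$ follows immediately from Lemma~\ref{lem:star}(2), so only the inclusion $\CA_\rho^\infty \subset \sum_i \CA_{\rho_i}^\infty$ requires work. Given $U\in\CA_\rho^\infty$, I would start from sequences $s_k \downarrow 0$ and $Y_k\in\CA_\rho$ with $s_k Y_k\to U$, select allocations $\mbf Y^k$ of $Y_k$ with $\sum_i\rho_i(Y_i^k)\le\rho(Y_k)+s_k$, and normalise via cash-additivity to $\widetilde Y_i^k := Y_i^k-\rho_i(Y_i^k)\in\CA_{\rho_i}$. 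Applying Lemma~\ref{lem:1} to $(\widetilde Y_1^k,\dots,\widetilde Y_n^k)$ over the partition $\pi$ from Assumption~\ref{assum1} yields a locally comonotone allocation $\mbf Z^k$ with $Z_i^k = \sum_{B\in\pi} f_i^{B,k}(\sum_j\widetilde Y_j^k)\ind_B$, $\mbf f^{B,k}\in\mf C(n)$, and $Z_i^k\in\CA_{\rho_i}$. Up to passing to a subsequence, the aggregate $s_k\sum_i Z_i^k = s_k Y_k - s_k\sum_i\rho_i(Y_i^k)$ converges in $L^\infty$-norm to $U-m$ for some $m\in[\E[Q^*U],0]$, where $Q^*$ is the common compatible element from Assumption~\ref{def:RMR}.

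The main step---and principal obstacle---is proving that each scaled coordinate sequence $W_i^k := s_k Z_i^k$ is norm-bounded. This parallels Proposition~\ref{prop:convergence}, but in a rescaled regime where only the aggregate after multiplication by $s_k$ is bounded. I would argue by contradiction as in that proposition. Decomposing $W_i^k = \sum_B s_k f_i^{B,k}(0)\ind_B + \sum_B s_k\widetilde f_i^{B,k}(\sum_j\widetilde Y_j^k)\ind_B$, the second sum is uniformly bounded by 1-Lipschitz continuity and boundedness of $s_k\sum_j\widetilde Y_j^k$. Hence, if $(W_1^k)_k$ were unbounded, dividing by $\beta_k := \max_B s_k|f_1^{B,k}(0)|\to\infty$ would produce, along a subsequence, a nonzero limit $V\in\CA_{\rho_1}^\infty$, with $\sum_{i\ge 2}W_i^k/\beta_k\to -V$. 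Assumption~\ref{def:RMR} is the crucial input: via Lemma~\ref{lem:direction}(1), $Q^*\in\mbf C(\rho_1)$ forces $\E[Q^*V]\le 0$, while finiteness of $\rho_i^*(Q^*)$ for $i\ge 2$ forces $\E[Q^*V]\ge 0$; compatibility of $Q^*$ for $\rho_1$ then yields $V=0$, a contradiction. Iterating the argument gives norm-boundedness of $(W_i^k)_k$ for $i\in[n-1]$, and the last coordinate is bounded by subtraction.

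Once boundedness is established, a limit can be extracted by Arzel\`a--Ascoli applied to the rescaled functions $h_i^{B,k}(y) := s_k\widetilde f_i^{B,k}(y/s_k)$, which are uniformly 1-Lipschitz with $h_i^{B,k}(0)=0$. Successively passing to subsequences (finitely many times) gives scalars $s_k f_i^{B,k}(0)\to c_i^B$ and $h_i^{B,k}\to h_i^B$ uniformly on compacts. Combining with $s_k\sum_j\widetilde Y_j^k\to U-m$ in $L^\infty$-norm yields $W_i^k\to U_i := \sum_B(c_i^B+h_i^B((U-m)|_B))\ind_B$ in norm, and since $W_i^k=s_k Z_i^k$ with $Z_i^k\in\CA_{\rho_i}$ and $s_k\to 0$, we get $U_i\in\CA_{\rho_i}^\infty$ and $\sum_i U_i = U-m$. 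Finally, since $m\le 0$, monotonicity of $\rho_1$ implies $Z_1^k + m/s_k \in\CA_{\rho_1}$, so $U_1+m=\lim s_k(Z_1^k+m/s_k)\in\CA_{\rho_1}^\infty$. Therefore $U=(U_1+m)+U_2+\dots+U_n\in\sum_i\CA_{\rho_i}^\infty$, which completes the proof.
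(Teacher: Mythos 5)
Your argument is correct, but it follows a genuinely different route from the paper's. The paper keeps the star-shaped hulls $\tau_i=(\rho_i)_\star$ at the centre of the proof: since $Y_k\in\CA_\rho=\sum_i\CA_{\rho_i}$ (from the proof of Theorem~\ref{thm:risksh_mon}), the scaled positions $s_kY_i^k$ lie in the genuine acceptance sets $\CA_{\tau_i}$, so the comonotone-improvement and compactness machinery of Proposition~\ref{prop:convergence} can be reused verbatim for the consistent risk measures $\tau_i$ (whose compatible sets agree with those of $\rho_i$ by Lemma~\ref{lem:star}(3)); this first yields only $U\in\sum_i\CA_{\tau_i}$, and a second pass applied to the elements $kU$, rescaling the comonotone functions by $1/k$ (star-shapedness again), upgrades this to $U\in\sum_i\CA_{\tau_i}^\infty$. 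You instead work directly with the $\rho_i$: you allocate the unscaled $Y_k$ (near-optimally, normalised into $\CA_{\rho_i}$, which introduces the drift $m$), rescale afterwards, re-run the Dieudonn\'e-type boundedness argument of Proposition~\ref{prop:convergence} in the rescaled regime (the dual inputs from Assumption~\ref{def:RMR} are exactly the same, so this goes through), and then exploit that the general definition \eqref{eq:asymptotic} of the asymptotic cone accepts limits of $s_kZ_i^k$ with $Z_i^k\in\CA_{\rho_i}$, so each coordinate limit lands in $\CA_{\rho_i}^\infty$ in a single pass; the drift $m\le 0$ is absorbed by cash-additivity. What the paper's route buys is that no new estimate is needed---everything is delegated to Proposition~\ref{prop:convergence} and Lemma~\ref{lem:star}---at the price of the hull construction and the two-stage $\{kU\}$ argument; your route dispenses with the hulls (you only need Lemma~\ref{lem:star}(2) for the stated equality) and with the second pass, at the price of redoing the boundedness estimate in rescaled form and carrying the drift term. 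A small streamlining of your version: using $\CA_\rho=\sum_i\CA_{\rho_i}$ you could take exact decompositions $Y_k=\sum_iY_i^k$ with $Y_i^k\in\CA_{\rho_i}$ and drop $m$ altogether.
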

\begin{proof}
The second identity in \eqref{outcome} is a direct consequence of \eqref{eq:asymptoticequal}.
For the first inclusion, let $U\in\CA_\rho^\infty$, allowing us to find a null sequence $(s_k)_{k\in\N}\subset(0,1)$ and $(Y_k)_{k\in\N}\subset\CA_\rho$ such that $\lim_{k\to\infty}s_kY_k=U$. By Corollary~\ref{cor:convergence} and Theorem~\ref{thm:risksh_mon}, 
$\CA_{\rho}=\sum_{i=1}^n\CA_{\rho_i}$, whence one conludes for all $k\in\N$ that $s_kY_k\in\sum_{i=1}^n\CA_{\tau_i}$.
Each risk measure $\tau_i$ is $\Q_i$-consistent, $i\in[n]$, by Lemma~\ref{lem:star}(2), and the assumptions of Lemma~\ref{lem:1} are met. We can thus find a family $(\mbf f^{B,k})_{B\in\pi,\,k\in\N}\subset \mf C$, such that the associated locally comonotone allocation over $\pi$ satisfies 
\[\sum_{B\in\pi} f^{B,k}_i(s_kY_k)\ind_B\in \CA_{\tau_i},\quad i\in[n].\]
The consistent risk measures $\tau_1,\ldots,\tau_n$ are even admissible by Lemma~\ref{lem:star}(4).
Applying the argument from the proof of Corollary~\ref{cor:convergence}, there is a subsequence $(k_\lambda)_{\lambda\in\N}$ and $(\mbf g^B)_{B\in\pi}\subset\mf C$ such that, for all $i\in[n]$,
\[
U_i:=\sum_{i=1}^ng_i^B(U)\ind_B=\lim_{\lambda\to\infty}\sum_{B\in\pi}f^{B,k_\lambda}_i(s_{k_\lambda}Y_{k_\lambda})\ind_B.
\]
Hence, $\mbf U\in\mathbb A_U$ and  $U\in\sum_{i=1}^n\CA_{\tau_i}$.

As $\CA_\rho^\infty$ is a cone, we can repeat this argument for each multiple $mU$, $m\in\N$, and obtain a family $(\mbf g^{B,m})_{B\in\pi,\,m\in\N}$ in $\mf C$ such that, for all $m\in\N$ and $i\in[n]$, 
$\sum_{B\in\pi}g_i^{B,m}(mU)\ind_B\in\CA_{\tau_i}$.
In particular, for $\w{\mbf g}^{B,m}:=\tfrac 1 m\mbf g^{B,m}(m\,\cdot)\in\mf C$, star-shapedness of $\CA_{\tau_i}$ yields, for all $i\in[n]$,
$\sum_{B\in \pi}\w{\mbf g}^{B,m}(U)\ind_B\in\CA_{\tau_i}$.
Using the above argument once more, there is a subsequence $(m_\lambda)_{\lambda\in\N}$ and $(\mbf h^B)_{B\in\pi}\subset\mf C$ such that, for all $i\in[n]$,
\[\lim_{\lambda\to\infty}\sum_{B\in \pi}\w g^{B,m_\lambda}_i(U)\ind_B=\sum_{B\in \pi} h_i^{B}(U)\ind_B\in\CA_{\tau_i}^\infty.\]
This is sufficient to show that 
\[U=\sum_{i=1}^n\sum_{B\in\pi}h_i^B(U)\ind_B\in\sum_{i=1}^n\CA_{\tau_i}^\infty.\]
\end{proof}

\begin{proof}[Proof of Theorem~\ref{thm:application2}]
First, we prove that $\eta(X)>-\infty$ holds for all $X\in L^\infty$.
Let $Z^*$ be the probability density from Assumption~\ref{assum:RMR}. For all $\mbf X\in\mathbb A_X$, 
\begin{align*}
\sum_{i=1}^n\eta_i(X_i)&=\sum_{i=1}^n\inf\{\E_\P[Z^*S_i]\mid S_i\in\mathcal S_i,~X_i-S_i\in\CA_{\rho_i}\}\\
&=\sum_{i=1}^n\inf\{\E_\P[Z^*X_i]-\E_\P[Z^*(X_i-S_i)]\mid S_i\in\mathcal S_i,~X_i-S_i\in\CA_{\rho_i}\}\\
&\ge \sum_{i=1}^n\inf\{\E_\P[Z^*X_i]-\E_\P[Z^*Y_i]\mid Y_i\in\CA_{\rho_i}\}= \E_\P[Z^*X]-\sum_{i=1}^n\rho_i^*(Z^*).
\end{align*}
Recalling that the risk measure $\rho:=\Box_{i=1}^n\rho_i$ satisfies $\CA_\rho=\sum_{i=1}^n\CA_{\rho_i}$, 
we shall now verify that $\CA_\rho+\ker(\pi)$ is closed.
Let $(A^k)_{k\in\N}\subset\CA_\rho$ and $(N^k)_{k\in\N}\subset \ker(\pi)$ be sequences such that $Y^k:=A^k+N^k\to Y\in L^\infty$ as $k\to\infty$. Assume towards a contradiction that the sequence $(A^k)_{k\in\N}$ is unbounded.
This also implies that $(N^k)_{k\in\N}$ is unbounded and that 
$\lim_{k\to\infty}\frac{\|N^k\|_\infty}{\|A^k\|_\infty}=1.$
As $\ker(\pi)$ is of finite dimension, we can assume up to potentially switching to a subsequence that, for suitable $N^*\in\ker(\pi)$, 
\[
\lim_{k\to\infty}\frac{N^k}{\|A^k\|_\infty}=N^*=-\lim_{k\to\infty}\frac{A^k}{\|A^k\|_\infty}.
\]
By definition, $-N^*\in\CA_\rho^\infty$. Set $\tau_i:=(\rho_i)_\star$ as in \eqref{starshaped hull}. By Lemma~\ref{lem:sumasymptotic}, we can find $K_i\in\CA_{\tau_i}^\infty$, $i\in[n]$, such that $\mbf K\in\mathbb A_{-N^*}$.
Using Lemma~\ref{lem:direction}(1) in the first and second estimate, we find for all $j\in[n]$ that 
\[0\ge \E_\P[Z^*K_j]\ge\sum_{i=1}^n\E_\P[Z^*K_i]=\E_\P[Z^*(-N^*)]=0.\]
As $Z^*\in\bigcap_{i=1}^n\mbf C(\tau_i)$ by Lemma~\ref{lem:star}(3), $K_i=0$ for all $i\in[n]$, whence $N^*=0$ follows.
The assumption that $(A^k)_{k\in\N}$ is unbounded has to be absurd.
Consequently, the sequence $(N^k)_{k\in\N}=(Y^k-A^k)_{k\in\N}$ is also bounded. As $\ker(\pi)$ is of finite dimension and $\CA_\rho$ is closed, there are $N\in\ker(\pi)$ and $A\in\CA_\rho$ such that, for a suitably chosen subsequence $(k_\lambda)_{\lambda\in\N}$, $A^{k_\lambda}\to A$ and $N^{k_\lambda}\to N$, $\lambda\to\infty$. We have verified that 
$$Y=A+N\in\CA_\rho+\ker(\pi).$$
Let $X\in\sum_{i=1}^n\dom(\eta_i)=\dom(\eta)$ and $\mbf U$, $U$ as in the assertion. For all $k\in\N$ we may choose an allocation $\mbf X^k\in\mathbb A_X$ such that 
\[\sum_{i=1}^n\eta_i(X_i^k)\le \eta(X)+\tfrac 1 k.\]
Moreover, for all $i\in[n]$ we may select $S^k_i\in\mathcal S_i$ such that 
\[\price_i(S^k_i)\le \eta(X_i^k)+\tfrac 1{kn}\quad\text{and}\quad X^k_i-S^k_i\in\CA_{\rho_i}.\]
For the global security $S^k:=\sum_{i=1}^nS_i^k\in\mathcal M$, we observe
\[X-\pi(S^k)U=(X-S^k)+(S^k-\pi(S^k)U)\in\CA_\rho+\ker(\pi).\]
As $\lim_{k\to\infty}\pi(S^k)=\eta(X)$, 
$X-\eta(X)U\in\CA_\rho+\ker(\pi)$
holds by closedness of the latter Minkowski sum. For appropriate $\mbf N\in\prod_{i=1}^n\Scal_i$, setting $N:=\sum_{i=1}^nN_i\in\ker(\pi)$,
$X-\eta(X)U-N\in\CA_\rho$. Using Lemma~\ref{lem:1}, there is a family $(\mbf f^B)_{B\in\pi}\subset\mf C$ such that, for all $i\in[n]$, 
$\sum_{B\in\pi}f_i^B(X-\eta(X)U-N)\ind_B\in\CA_{\rho_i}$.
\eqref{eq:shape} is now an immediate consequence of $\Scal_i$-additivity of $\rho_i$, i.e., for all choices of $X\in\Linfty$ and $S\in\Scal_i$, $\rho_i(X+S)=\rho_i(X)+\price_i(S)$. 
\end{proof}

\begin{proof}[Proof of Proposition~\ref{not li}]
(1) clearly implies (2). For the converse, we shall assume towards a contradiction that $\mathcal S$ contains a nonconstant random variable and that $\eta$ is $\P$-law invariant. 
We then first observe that 
\begin{equation}\label{only trivial}\dom(\eta^*)=\{1\}.\end{equation}
To this end, note that, for all $X\in\dom(\eta)$,
\begin{align}\label{conj eta}\begin{split}\eta(X)&= \inf\{\phi(S)\mid S\in\mathcal S,\,X-S\in\CA_\rho\}\\
&= -\sup\{\phi(X-S)-\phi(X)\mid S\in\mathcal S,\,X-S\in\CA_\rho\}\\
&\ge\phi(X)-\rho^*(\phi).\end{split}\end{align}
That is, $\phi\in\dom(\eta^*)$. 
Next observe that, for nonconstant $S\in\mathcal S$, $x\in\dom(\eta)\cap \R$, and $t\in\R$, $\eta(x+tS)=\eta(x)+t\price(S)$. By $\P$-law invariance of $\eta$, \cite[Theorem 4.1]{Collapse} yields $\dom(\eta^*)\subset\R$. As $\phi(1)=1$, this entails $\dom(\eta^*)=\{1\}$, and \eqref{only trivial} is proved. 

As the $\P$-consistent risk measure $\rho$ is admissible, we can fix a nonconstant $Z\in\dom(\rho^*)$ (Proposition~\ref{compatible nonempty}). By Lemma~\ref{lem:structure}(2), we can assume that $Z$ only takes two values, say, $x<1<y$. Let $p:=\P(Z=x)$ and select a Hamel basis $\{S_1,\ldots,S_m\}$ of $\mathcal S$ satisfying $\E_\P[S_j]=1$. Let $k>0$ be large enough such that $S_j+k\in L^\infty_+$, $j\in[m]$ and  consider the vector-valued measure $\boldsymbol{\mu}\colon \CF\to[0,\infty)^{m+1}$ defined by 
$$\boldsymbol{\mu}(A)=\big(\P(A),\E_\P[(S_1+k)\ind_A],\ldots,\E_\P[(S_m+k)\ind_A]\big).$$ 
Each coordinate measure is an atomless measure. Lyapunov's Convexity Theorem \cite[Theorem 13.33]{Ali} implies that $\mathcal R:=\{\boldsymbol{\mu}(A)\mid A\in\CF\}$ is convex. In particular, 
$\big(p,p(k+1),\ldots,p(k+1)\big)\in \mathcal R$,
i.e., we find $B\in\CF$ such that $\P(B)=p$ and $\E_\P[(S_j+k)\ind_B]=p(k+1)$, $j\in[m]$. In particular, $\E_\P[S_j\ind_B]=p$ for all $j\in[m]$.
Now consider $\w Z:=x\ind_B+y\ind_{B^c}\sim_\P Z$, i.e., $\w Z\in\dom(\rho^*)$, and compute for all $j\in[m]$ that 
\[\E_\P[\w Z S_j]=x\E_\P[S_j\ind_{B}]+y\E_\P[S_j\ind_{B^c}]=xp+y(1-p)=1=\E_\P[S_j]=\price(S_j).\]
Consequently, $\E_\P[\w Z S]=\price(S)$ holds for all $S\in\mathcal S$, and we conclude as in \eqref{conj eta} that $\w Z\in\dom(\eta^*)$, contradicting \eqref{only trivial} established above. 
\end{proof}

\section{Examples}\label{sec:examples}

\begin{example}\label{ex:onlyone}
For $n\in\N$ define the convex monetary risk measure $\tau_n\colon L^\infty\to\R$ by
\[\tau_n(X):=\begin{cases}\es_1(X)&n=1,\\
\es_{\frac 1 n}(X)+n&n\ge 2,\end{cases}\]
where $\es_p$, $p\in[0,1]$, denotes the Expected Shortfall under $\P$ at level $p$; i.e., for a version of the quantile function $q_X$ of $X$ under $\P$, 
\begin{equation}\label{eq:ES}\es_p(X)=\begin{cases}\tfrac 1{1-p}\int_p^1q_X(s)\,{\rm d}s&p\in (0,1),\\
\sup_{s\in(0,1)}q_X(s)&p=1.
\end{cases}\end{equation}
The functional $\rho\colon L^\infty\to\R$ defined by  $\rho(X):=\inf_{n\in\N}\tau_n(X)$ is a consistent risk measure which is {\em not} star shaped (and therefore also not convex). 
Next we observe that assertion (5) in Proposition~\ref{compatible nonempty} applies. To see this, we may invoke that $(\Omega,\CF,\P)$ is atomless, select 
$A_k\in\CF$ satisfying $\P(A_k)=\frac 1 k$, and set $X_k:=-k^2\ind_{A_k}$, $k\in\N\setminus\{1\}$. Then $\tau_1(X_k)=0$ and $\tau_n(X_k)=n$ for all $2\le n\le k$. However, for $n>k$, we observe \[\tau_n(X_k)=-\tfrac{nk^2}{n-1}\cdot\left(\tfrac 1 k-\tfrac 1n\right)+n\ge n\cdot\left(1-\tfrac{k}{n-1}\right)\ge n\cdot\left(1-\tfrac{n-1}{n-1}\right)=0.\]
Hence, $\rho(X_k)=0$ for all $k\in\N$, while $\E_\P[X_k]=-k$.

Nevertheless, $\mbf C(\rho)=\varnothing$. It is well known that, for all $n\ge 2$,
$\dom(\tau_n^*)=\{Z\in L^1_+\mid\E_\P[Z]=1,\,Z\le \tfrac n{n-1}\}.$
This entails $\{1\}\subset\dom(\rho^*)\subset \bigcap_{n=2}^\infty\dom(\tau_n^*)=\{1\}$, whence $\mbf C(\rho)=\varnothing$ follows with Proposition~\ref{uncountably many} and Lemma~\ref{lem:star}.  
\end{example}

\begin{example}\label{ex:many}\
\begin{enumerate}[(1)]
\item It is well possible that $\mbf C(\rho)\subsetneq\dom(\rho^*)\cap L^1$. As an example, fix $0<p<1$ and suppose that $\rho=\es_{p}$, the Expected Shortfall at level $p$ defined as in \eqref{eq:ES}. 
$\rho$ is convex, normalised, and $\P$-consistent. 
We have recalled in the previous example that $\dom(\rho^*)=\{Z\in L^\infty_+\mid Z\le \tfrac{1}{1-p},~\E_\P[Z]=1\}.$
As $(\Omega,\CF,\P)$ is atomless, we may fix an event $A\in\CF$ with $\P(A)=1-p$. Let $Z^*:=\tfrac1{1-p}\ind_A$ and $U:=-\ind_{A^c}$. 
For all $t\ge 0$, $\rho(tU)\le\rho(0)=0$.
Moreover, for $Z\in\dom(\rho^*)$, $\E_\P[ZU]=0$ is equivalent to $Z\ind_{A^c}=0$ $\P$-a.s., whence we conclude that $Z=Z^*$. 
Hence, $U\in\CA^\infty_\rho$ and $\E_\P[Z^*U]=0$, but $U\neq 0$, which means that $Z^*\notin\mbf C(\rho)$.
\item Consider the entropic risk measure 
\[\rho(X):=\log\left(\E_\P[e^X]\right)=\sup_{Q\in L^1_+:\,\E_\P[Q]=1}\E_\P[QY]-\E_\P\left[Q\log(Q)\right],\quad X\in L^\infty.\]
$\rho$ is a normalised, convex, and $\P$-consistent risk measure which satisfies $\CA_\rho^\infty=-L^\infty_+$. 
In particular, each $Q\in L^1_+$ which satisfies $\E_\P[Q]=1$ and $\P(Q>0)=1$ lies in $\mbf C(\rho)$.
\item Examples (1)--(2) consider convex risk measures. From \cite[Example 3.3]{Consistent} we recall that a nonconvex  $\P$-consistent risk measure is given by considering the convex monetary risk measures 
\[\begin{array}{l}\tau_1(X)=\tfrac 1 2 \es_1(X)+\tfrac 1 2\E_\P[X],\\
\tau_2(X)=\es_{\frac 1 2}(X),\end{array}\quad\quad X\in L^\infty,\]
where $\es_p$ is defined as in \eqref{eq:ES}, and setting  $\rho:=\min\{\tau_1,\tau_2\}$.
Using, for instance, \cite[Theorem 5]{Starshape} or a direct verification, one shows that $\rho$ is star shaped. Moreover,  
\begin{align*}
&\dom(\tau_1^*)\cap L^1=\{Q\in L^1_+\mid\E_\P[Q]=1,\,Q\ge \tfrac 1 2\},\\ 
&\dom(\tau_2^*)\cap L^1=\{Q\in L^\infty_+\mid\E_\P[Q]=1,\,0\le Q\le 2\}.
\end{align*}
Hence, $\dom(\rho^*)\cap L^1=\{Q\in L^\infty_+\mid \E_\P[Q]=1,\,\tfrac 1 2 \le Q\le 2\}$, an observation that illustrates Proposition~\ref{uncountably many}.
\end{enumerate}
\end{example}

\begin{example}\label{ex:add1}
Fix an event $A\in\CF$ with $\P(A)=\tfrac 12$ and consider the probability measure $\Q\approx\P$ defined by 
$\tfrac{{\rm d}\Q}{{\rm d}\P}=\tfrac 1 2\ind_A+\tfrac 3 2\ind_{A^c}$. Denoting the Expected Shortfall under $\P$ by $\es^\P_\bullet$ and the Expected Shortfall under $\Q$ by $\es^\Q_{\bullet}$, we set 
$\rho_1:=\es_{1/5}^\P$ and $\rho_2:=\es_{1/6}^\Q$. 
One then identifies
$\dom(\rho_1^*)=\{Z\in L^\infty_+\mid \E_\P[Z]=1,\, Z\le 1.25\}$
and
$\dom(\rho_2^*)=\{Z\in L^\infty_+\mid \E_\P[Z]=1,\, Z\le 2.4\ind_A+0.8\ind_{A^c}\}.$
It is obvious that $\rho^*_2(1)=\infty$ and that $\rho_1^*(\tfrac{{\rm d}\Q}{{\rm d}\P})=\infty$. However, 
$\w Z:=1.22\ind_{A}+0.78\ind_{A^c}$
lies in $\mbf C(\rho_1)\cap\mbf C(\rho_2)$.
 Indeed, for $\lambda<1$, but close to 1 and suitably chosen, we can guarantee that $V:=\lambda^{-1}(Z-1+\lambda)$ satisfies $V\in L^\infty_+$ and $V\le 1.25$, i.e., $V\in\dom(\rho_1^*)$. In particular, $\w Z=\lambda V+1-\lambda$, which means $\w Z\in\mbf C(\rho_1)$ by \eqref{eq:is compatible}. One similarly verifies $\w Z\in\mbf C(\rho_2)$.
\end{example}

The following example illustrates the necessity of the assumptions of Theorem~\ref{thm:risksh_mon} in case $n=2$. 

\begin{example}\label{ex:big}\
\begin{enumerate}[(1)]
\item {\em All appearing reference probability measures have to be equivalent:} Suppose otherwise that $\Q\ll\P$ is arbitrarily chosen with the property $\Q\not\approx\P$. Consider the convex monetary risk measures $\rho_1(X):=\log\left(\E_\P[e^X]\right)$ and $\rho_2(X):=\E_\Q[X]$, $X\in\Linfty$. 
By \cite[Example 6.1]{JST}, $\rho_1\Box\rho_2$ is not exact at all $X\in L^\infty$. 

\item {\em Equivalence of reference probability measures alone is not enough:} To illustrate this, consider the convex monetary risk measures defined in (1) and assume $\Q\approx\P$. If $\rho_1\Box\rho_2$ were exact at each $X\in\Linfty$, one could follow the reasoning of \cite[Example 6.1]{JST} and conclude that, if $\rho_1\Box\rho_2(X)=\rho_1(X_1)+\rho_2(X_2)$ for some allocation
$\mbf X\in\mathbb A_X$, then $\tfrac{{\rm d}\Q}{{\rm d}\P}$ would have to be a subgradient of $\rho_1$ at $X_1$, i.e., $\rho_1(X_1)=\E_\P[\frac{{\rm d}\Q}{{\rm d}\P}X_1]-\rho_1^*\big(\tfrac{{\rm d}\Q}{{\rm d}\P}\big)$.
By \cite[Lemma 6.1]{Subgradients}, the identity 
\[\frac{{\rm d}\Q}{{\rm d}\P}=\frac{e^{X_1}}{\E_\P[e^{X_1}]}\]
must hold. As $X_1\in\Linfty$, $\log(Q)$ has to be bounded from above and below, that is, there have to be constants $0<s<S$ such that $\P(Q\in[s,S])=1$.

\item {\em Assumption (iii) cannot be dropped, even if Assumption~\ref{assum1} holds:} In case $n=2$, assumption (iii) of Theorem~\ref{thm:risksh_mon} reads as $\mbf C(\rho_1)\cap \dom(\rho_2^*)\neq\varnothing$.
Here we use \cite[Example 4.3]{Acciaio}.
Consider the probability measure $\Q\approx\P$ from Example~\ref{ex:add1} and define the convex monetary risk measures
\[\begin{array}{l}\rho_1(X):=\tfrac 1 2\E_\P[X]+\tfrac 1 2\log(\E_\P[e^{X}]),\\
\rho_2(X):=\E_{\Q}[X],\end{array}\quad\quad X\in L^\infty.\]
By \cite[Example 4.3]{Acciaio}, $\rho_1\Box\rho_2$ is not exact at all $X\in L^\infty$. However, as $\dom(\rho^*_2)$ is a singleton, $\mbf C(\rho_1)\cap \dom(\rho_2^*)\neq\varnothing$ would hold if and only if $\tfrac{{\rm d}\Q}{{\rm d}\P}\in\mbf C(\rho_1)$. Setting $A=\{\tfrac{{\rm d}\Q}{{\rm d}\P}=\tfrac 1 2\}$, we consider $U:=-\ind_A+\tfrac 1 3\ind_{A^c}$ and show by a direct computation that $U\in\CA_{\rho_1}^\infty$. As, however, $\E_{\Q}[U]=0$, $\tfrac{{\rm d}\Q}{{\rm d}\P}$ cannot be compatible for $\rho_1$. 
\end{enumerate}
\end{example}

\begin{example}\label{ex:add2}
Consider the homogeneous case $\Q_1=\Q_2=\P$. 
Moreover, let $\rho_1=\rho_2$ be two entropic risk measures on $L^\infty$ defined by 
$\rho_i(X)=\tfrac 1 2\log\big(\E_\P[e^{2X}]\big)$, $i=1,2$. By \cite[Example 2.9]{Subgradients}, $\rho:=\rho_1\Box\rho_2$ is given by $\rho(X)=\log(\E_\P[e^X])$. Suppose $Z^*\in L^1_+$ satisfies $\E_\P[Z^*]=1$ and $\E_\P[Z^*\ind_A]=0$ for some $A\in\CF$ with $\P(A)>0$. 
Define capital requirement $\eta_1$ in terms of the risk measurement regime $\CA_{\rho_1},\tn{span}\{1\},id_\R)$, i.e., $\rho_1=\eta_1$. For $\eta_2$, we consider the security space $\mathcal S_2:=\tn{span}\{\ind_\Omega,\ind_{A^c}\}$ and assume $\price_2(S)=\E_\P[Z^*S]$, $S\in\mathcal S_2$. 

As the global security space is given by $\mathcal M=\mathcal S_2$, we also have 
$\ker(\pi)=\R\cdot\ind_A$. 
The associated risk sharing functional $\eta$ is easily seen to be finite.
Assume for contradiction that $\eta$ is exact at all $X\in\dom(\eta)=L^\infty$ and fix an optimal allocation $\mbf X\in\mathbb A_X$. 
As $\eta$ is convex and continuous at $X$, we can find some subgradient $Z$ at $X$. This subgradient has to satisfy $Z\in L^1_+$ because $\eta$ has the Lebesgue property, i.e., $\dom(\eta^*)\subset L^1$. We obtain
\[\eta(X)=\E_\P[ZX]-\eta^*(Z)=\sum_{i=1}^2\E_\P[ZX_i]-\rho_i^*(Z)\le \sum_{i=1}^2\rho_i(X_i)=\eta(X).\]
Hence, $Z$ is a subgradient of $\rho_i$ at $X_i$, $i=1,2$. Applying \cite[Theorem 3]{FKM2015}, $\E_\P[Z\boldsymbol{\cdot}]|_{\ker(\pi)}\equiv 0$, or equivalently, 
$\E_\P[Z\ind_A]=0$. However, by \cite[Lemma 6.1]{Subgradients},
$Z=\E_\P[e^{X_2}]^{-1}e^{X_2}.$
No $Z$ of this shape can satisfy $\E_\P[Z\ind_A]=0$. 
\end{example}


\end{document}